\title{Finite-Degree Predicates and Two-Variable First-Order Logic}
\author[1]{Charles Paperman}
\affil[1]{
University of Warsaw}
\DeclarePairedDelimiter\floor{\lfloor}{\rfloor}
\theoremstyle{plain}
\newtheorem{theorem}{Theorem}
\newtheorem{proposition}[theorem]{Proposition}
\newtheorem{corollary}[theorem]{Corollary}
\newtheorem{lemma}[theorem]{Lemma}
\newtheorem*{lemma*}{Lemma}
\newcommand{\N}{{\mathbb N}}
\renewcommand{\leq}{\leqslant}
\renewcommand{\geq}{\geqslant}
\newcommand{\cP}{\mathcal{P}}
\newcommand{\cF}{\mathcal{F}}
\newcommand{\arb}{\mathrm{Arb}}
\newcommand{\bitadd}{\mathrm{MSB}}
\newcommand{\FO}{\mathbf{FO}}
\newcommand{\EF}{Ehrenfeucht-Fraïssé }
\newenvironment{conditions}
{%
	\begin{list}{\rm (\theenumi)}%
	{\noindent%
		\usecounter{enumi}%
		\setlength{\topsep}{2pt}%
		\setlength{\partopsep}{0pt}%
															 \setlength{\itemsep}{2pt}%
		\setlength{\parsep}{0pt}%
		\setlength{\leftmargin}{2.5em}%
		\setlength{\labelwidth}{1.5em}%
		\setlength{\labelsep}{0.5em}%
		\setlength{\listparindent}{0pt}%
		\setlength{\itemindent}{0pt}%
	}%
}%
{\end{list}}%
\newcommand{\pneg}[1]{#1_-}
\newcommand{\ppos}[1]{#1_+}
\DeclareMathOperator{\ttau}{\mkern-2 mu-\mkern-1 mu \tau}
\newcommand{\St}{\cP_3}
\DeclareFontFamily{U}{mathx}{\hyphenchar\font45}
\DeclareFontShape{U}{mathx}{m}{n}{
      <5> <6> <7> <8> <9> <10>
      <10.95> <12> <14.4> <17.28> <20.74> <24.88>
      mathx10
      }{}
\DeclareSymbolFont{mathx}{U}{mathx}{m}{n}
\DeclareMathSymbol{\bigtimes}{1}{mathx}{"91}
\newcommand{\Cl}[1]{\textrm{Cl}(#1)}
\begin{document}
\maketitle
\begin{abstract}
	We consider two-variable first-order logic on finite words with a fixed number of quantifier alternations. We show that all languages with a {neutral letter} 	definable using the order and {finite-degree} predicates are also definable  with the order predicate only. From this result we 	derive the separation of the alternation hierarchy of two-variable logic on this signature.
\end{abstract}
\section{Introduction}

	Finite model theory and the lower classes of circuit complexity are intricately interwoven.
	In the context of circuit complexity, logics are considered over finite words with \emph{arbitrary numerical predicates}.
	Intuitively, we allow the use of any predicate that only depends on the size of the word.
	A first result
	from Immerman~\cite{Immerman87} provides an equivalence between languages definable by
	first-order logic enriched with
	arbitrary numerical predicates  on the one hand,
	and languages computable by families of circuits of constant depth and polynomial size
	on the other. Since then, several  meaningful
	 circuit complexity classes have been shown to be equivalent to
	 logical fragments~\cite{Bar92,KLPT06}.  It is therefore possible
	to obtain deep and interesting inexpressibility results by using circuits lower bounds.

	For instance, by using a famous lower bound for the \emph{parity} language~\cite{FSS84}, Barrington, Compton,
	Straubing and Thérien~\cite{Bar92} showed that the regular languages definable in first-order logic with arbitrary numerical predicates
	are definable with only the \emph{regular predicates}. Relying on an algebraic
	description of first-order logic with regular predicates, it is possible to decide the definability
	of a regular language in this logic.

	Conversely, it is tempting
	to use finite model theory methods to compute circuit lower bounds.  This approach has achieved
	relative success for
	\emph{uniform} versions of circuit complexity classes. For instance,
	 Roy and Straubing
	provide a separation result for the long-standing question of the separation of
	$\mathbf{ACC}$ from $\mathbf{NC^1}$ in a
	{highly uniform setting}~\cite{RS06}.
	In these settings, this uniformity condition has two different interpretations:
	\begin{enumerate}
		\item In the circuit framework, it is a restriction on the complexity of
		the wiring of the gates.
		\item In the logical framework, it is a restriction on the class of numerical
		predicates considered in the fragment.
	\end{enumerate}
	In order to deal with the combinatorics of arbitrary numerical predicates,
	the languages with a \emph{neutral letter}
	have been introduced in~\cite{BILST05}. Formally, a language $L$ has a neutral letter~$c$ if for any pair of words $u,v$, we have~$ucv\in L$ if, and only if, $uv \in L$. Less formally, this letter~$c$ can be added or removed anywhere in a word without changing its membership to~$L$.
	The underlying idea was that numerical predicates would be essentially useless in the presence of a neutral letter.
	This was made formal through the \emph{Crane Beach conjecture}:

	\emph{Every language with a neutral letter definable in first-order logic with arbitrary numerical predicates is definable in first-order
	logic with the linear order only}.

	Furthermore, some of the most interesting
	languages, such as the \emph{parity} language, possess a neutral letter.
	Unfortunately, this conjecture has been disproved in the article~\cite{BILST05} in the
	context of first-order logic, as long 	as the Bit predicate is in the signature. This result
	prevents the use of this approach to obtain circuit lower bounds for more expressive
	classes. However, for fragments of first-order logic the Crane Beach conjecture
	is still of interest. For instance, the Crane Beach conjecture holds for
	the fragment without quantifier
	alternation~\cite{BILST05}.

	Turning to other fragments, two-variable first-order logic is a robust and well-studied class
	that offers a wide range of  long-standing and intriguing open questions. It is not know
	whether the Crane Beach conjecture holds for this fragment. This question is related
	to a long-standing open linear lower bound for the addition function, since two-variable logic
	is equivalent to linear circuits of $\mathbf{AC}^0$~\cite{KLPT06}. Therefore,
	if the Crane Beach conjecture holds, then the addition function is not
	computable by a constant-depth linear-size circuit family.
	This result would improve
	on a known lower bound for  addition that states that addition is not computable by circuits
	of constant depth with a linear number of wires~\cite{CFL83}.
	We remark that lower bound for addition has been discussed and informally mentioned several
	times~\cite{RW91,CR96,KPT05,KLPT06} and formally
	stated in the article~\cite[Open problem 23]{Koucky09}.

	In this paper, we focus on the case of two-variable logic, which is poorly understood in this context.
	We first prove that languages with a neutral
	letter definable in two-variable logic with arbitrary numerical predicates
	can be defined allowing only the linear order and the following predicates:
	\begin{enumerate}
		\item The class $\cF$ of finite-degree predicates,  that is, binary predicates that are relations
	over integers and such that each vertex of their underlying infinite
	directed graph has a finite degree.
		\item The predicate $\bitadd_0$ defined as follows.
		The predicate $\bitadd_0$ is true of $x$ and $y$
		if the binary representation of $y$ is obtained by zeroing the most significant bit of $x$.
		 More formally
			$$\bitadd_0 =\{ (x,x-2^i)\mid x\in \N, \text{ and }i=\floor{\log(x)}\text \}\enspace.$$
	\end{enumerate}
	As an intermediate step toward a better comprehension of the Crane Beach conjecture for $\FO^2$,
	we propose to study the relationship between $<$ and $\cF$, and present a Crane Beach result
	which is thus one predicate shy from showing the Crane Beach conjecture for $\FO^2$ over arbitrary
	numerical predicates.

	The main result of this paper is a proof of
	the Crane Beach conjecture for each layer of the alternation hierarchy of the
	two-variable first-order logic equipped with the linear order and
	the {finite-degree predicates}.

	Note that the general arbitrary numerical predicates in the statement would entail a long standing conjecture  on the circuit complexity of the addition function. Thus, this result can be viewed as a uniform version of this circuit lower bound.
	This result immediately implies that this hierarchy is strict.
	This provides, to the best of our knowledge, the first example of a Crane Beach conjecture
	that applies to each level of an alternation hierarchy.  Ramsey's
	Theorem for $3$-hypergraphs will be our key combinatorics tool.
	This theorem indicates that the Crane Beach conjecture for $\FO^2$ hinges on the interaction
	between finite-degree predicates and the predicate $\bitadd_0$.  \\

	\noindent\textbf{On the two-variable restriction}:\\
	It is already known that the first-order
	logic with the ``$+$'' predicate satisfies the Crane Beach conjecture. Furthermore,
	the  $\bitadd_0$ predicate is definable in first-order logic with the predicate ``$+$''
	and the unary predicate $\{2^x\mid x\in \N\}$. The proof of the Crane Beach conjecture
	for ``$+$'' predicate can be augmented to handle this extra unary numerical predicate.
    Therefore,
	 we deduce that the first-order logic with the order and the $\bitadd_0$ predicate also satisfies
	the Crane Beach conjecture.

	The case of finite-degree predicates is more intricate. Indeed, even if this class of
	predicates satisfies a form of locality, it is still not known if the Crane Beach conjecture hold for
	$\FO[<,\cF]$. This class contains numerous expressive numerical
	predicates as the \emph{translated bit predicate} which is true in positions $(x,y)$
	if the $(y-x)^\text{th}$ bit of $x$ is a one. The Crane Beach conjecture
	may holds for finite-degree predicates but the classical proof, e.g. \emph{collapse on active domain},
	seems to fail~\cite{BILST05,RS06,KS12b}.  \\

	\noindent\textbf{Organization of the paper:}\\
	 Section~\ref{section:def} is dedicated to the necessary definitions. In Section~\ref{section:first_res} we present an \EF game adapted to our
	context.  We present
	in Section~\ref{section:main} our main result with immediate corollaries. The final section
	is dedicated to the proof.

\section{Definitions}\label{section:def}
  	 A finite word $u=u_0\cdots u_{n-1}$ of $A^*$ is represented by a relational
  	 structure on the set  $\{0,\cdots,n-1\}$ over the vocabulary consisting of the \emph{letter predicates}
  	 $\{\textbf{a}\mid a\in A\}$
  	 and of the \emph{numerical predicates}.
  	 On the one hand,
  	 the letter predicate $\textbf{a}$ is interpreted as the subset of all the positions labelled by the
  	 letter $a$. On the other hand,
  	a {numerical} predicate
	interpretation only depends on the size $n$ of the input word.
	Therefore, an interpretation of the predicate
	symbol $\mathbf{P}$ of arity $k$ is a sequence $P = (P_n)_{n}$,
	where $P_n \subseteq \{0,\ldots,n-1\}^k$. 	Note that $\mathbf{P}$ is a syntactic object,
	while $P$ is its interpretation.
	Furthermore a numerical predicate is said to be \emph{uniform} if it can be seen
	as a relation on integers. More precisely, a numerical predicate $P=(P_n)_n$ of arity $k$
	is uniform if there exists an integer relation $Q\subseteq \mathbb{N}^k$
	satisfying $Q\cap \{0,\ldots,n-1\}^k = P_n$. From now on, we do
	not distinguish numerical predicates from their interpretation and uniform
	predicates are seen as relations on integers. The class of all numerical predicates
	is denoted by $\arb$. Remark that the word \emph{uniform} in this context is not related
	to the classical notion of \emph{uniformity} in circuit complexity. \\

	\noindent\textbf{Examples:}
	\begin{itemize}
		\item	The classical predicates $x<y$ or $x+y=z$ and $xy=z$ are numerical predicates and are uniforms.
		\item The predicate $x+y=\max$, where $\max$ is the last position of the word, is not uniform.
	\end{itemize}

  	 The logical formulae we consider are the first-order formulae over finite words.
  	 They are obtained with the following grammar:
  	 \begin{displaymath}
  	 \varphi= \mathbf{a}(x)\mid \mathbf{P}(x_1,\ldots,x_k)\mid \varphi\land \varphi \mid \lnot \varphi\mid
  	 \exists x\ \varphi\enspace.
  	 \end{displaymath}
  	 Here $x,x_1,x_2,x_3,\ldots$ denote first-order variables, which are interpreted by positions in the word.
	The letter predicate $\mathbf a(x)$, is interpreted by
	``the letter in position~$x$ is an a,''
	and $\mathbf{P}(x_1,\ldots,x_k)$, is interpreted by ``the  predicate $P$ is true on $(x_1,\ldots,x_k)$.''
	As usual, the Boolean connectives $\wedge$ and $\neg$ are interpreted by ``and'' and ``not,'' respectively,
	and $\exists x$ as a first-order existential quantification.
	We use the standard notation $u\models \varphi$ to signify that the word
	$u$ satisfies the formula $\varphi$. We also denote by $u\models \varphi(i)$
	if the formula $\varphi(x)$ is \emph{true} when its  free variable is interpreted by the integer
	$i<|u|$. The \emph{quantifier depth} of a formula is the maximal number of nested quantifiers.

	Let $\cP$ be a class of numerical predicates. We denote by $\FO[\cP]$ the class of first-order formulae
	that use numerical predicates in $\cP$. We also denote by $\FO^2[\cP]$
	the subclass of formulae of $\FO[\cP]$ that use only two variables but allows the reuse of them.
	We say that a language $L$ is definable in a fragment of logic if there exists a formula
	in this fragment such that $L$ is the language of words satisfying this formula.\\

	\noindent\textbf{Example:}\\
	The language $A^*aA^*bA^*cA^*$ can be described by the first-order formula
	$$\exists x\ \exists y\ \exists z\ x <y < z \land \mathbf{a}(x)\land \mathbf{b}(y)\land \mathbf{c}(z)\enspace.$$
	This formula uses three variables $x,y$ and $z$. However, by reusing $x$ we can rearrange it so that it uses two variables:
	\begin{align}
	\exists x\ \mathbf{a}(x) \land \Big(\exists y\  x < y  \land \mathbf{b}(y)\land \big(\exists x \ y<x\land \mathbf{c}(x)\big)\Big)\label{eq1}
	\end{align}

	The alternation hierarchy of $\FO^2$ is also of interest here.
	To define formally the number of alternations of a formula,
	it is not possible to use prenex canonical normal form obtained  by applying DeMorgan’s laws
	to move negations past conjunctions,
	disjunctions and quantifiers. Indeed, these constructions increase the number of variables. That said, the number
	of alternations is still a relevant parameter that could be defined as follows:
	Consider the tree naturally associated to a formula, as the grammar previously exposed. For instance,
	formula \eqref{eq1} has ``$\exists$'' as a root  and the atomic formulae as the leaf.
	In a two-variable first-order formula we count the maximal number of alternations between the
	root and the leaves
	once the negations have been pushed on to the leaves. A more precise definition
	could be found in the article~\cite{IW09}. We denote by $\FO^2_k[\cP]$
	the formulae of $\FO^2[\cP]$ that have at most $k$ quantifier alternations.
	The hierarchy induced by $\FO^2_k[<]$ is known to be strict~\cite{IW09} and its membership
	problems is decidable~\cite{KS12,KW12}. Without loss of generality,
	we will always consider two-variable logic over
	predicates of arity at most $2$. \\

\section{\EF game}\label{section:first_res}
	One of the  important tools for proving our main result is the \EF game for two-variable logic.
	It is often used in the context of finite model theory to  show certain inexpressibility
	results. Libkin's book~\cite{Libkin_book} provides a good exposition.
	In this section, we present the \EF  game and  briefly sketch a proof that the Crane Beach conjecture
	holds for $\FO^2_m[<,+1]$. This could be easily proved
	by using some algebraic descriptions of $\FO^2_m[<,+1]$ obtained by Kufleitner and Lauser~\cite{KL13}
	but we prove it using \EF game as an introduction to our general result.

	In the context of two-variable logic with a bounded number of alternations $m$ and quantifier depth $s$,
	the associated \EF game is defined as follows:
	\begin{itemize}
		\item The game is played by two players: \emph{Spoiler} and \emph{Duplicator}, on two relational structures.
		In our case, the relational structures are associated with the words $u$ and $v$
		equipped with the letter predicates
		and a finite  number of numerical predicates.
		\item The first round starts with Spoiler,  who chooses either $u$ or $v$ and plays by putting a pebble
		on a position. Then Duplicator chooses the other word and puts a pebble on one of its positions.
		\item The subsequent rounds proceed as follows: each word is labelled by at most two pebbles.
		First, the two oldest pebbles are removed. Then,
		 Spoiler plays on one structure and Duplicator on the other. If the relational
		structures induced by the two pairs of pebbles are not isomorphic, Spoiler wins.
		\item During all the game, Spoiler can change at most $m$ times between
		the two words. Duplicator wins the game
		if he did not loose the game before the end of the $s^\text{th}$ round.
	\end{itemize}
	We say that Spoiler has a \emph{winning
	strategy} if he has a strategy that allows him to win the game whatever
	Duplicator plays.
	The following theorem 	is a well-known result
	that could be easily adapted, for instance,
	from the book~\cite{Libkin_book}.
	\begin{theorem}~\label{theo:ef}
		A language $L$ belongs to $\FO^2_m[\cP]$ if and only if there exist predicates
			$P^1,\ldots,P^t\in\cP$ and $s\in \N$ so that for any words $(u,v)\in L\times L^\text{c}$
			Spoiler has a winning strategy for the two-pebble game with $s$ rounds and
			 $m$ alternations on $(u,v)$ over the predicates $P^1,\ldots,P^t$.
	\end{theorem}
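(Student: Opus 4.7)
The plan is the standard back-and-forth argument for Ehrenfeucht-Fraïssé games, adapted to keep track of two quantitative parameters simultaneously: the quantifier depth $s$ and the alternation count $m$. The proof naturally splits into two directions, and the main subtlety lies not in either implication individually but in the bookkeeping needed to ensure the alternation bound is respected on both sides.

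For the forward direction, assume $L$ is defined by $\varphi \in \FO^2_m[\cP]$ with quantifier depth $s$ using predicates $P^1,\ldots,P^t \in \cP$. I would proceed by induction on the structure of $\varphi$ (with negations pushed to the leaves so that alternations are visible in the formula tree). Given $u \models \varphi$ and $v \not\models \varphi$, Spoiler's strategy follows the shape of $\varphi$: at each existential node Spoiler plays the witnessing position on the side where the quantifier is true, while at a universal node Spoiler plays on the side where the formula is false so that Duplicator is forced to respond. A switch between the two structures corresponds exactly to an alternation in the formula tree, which bounds the number of switches by $m$. Atomic subformulae, using only the predicates $P^1,\ldots,P^t$, ensure that if Duplicator survives all $s$ rounds, the two final configurations must agree on these predicates and on the letter predicates, contradicting that $\varphi$ evaluates differently.

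For the backward direction, I would fix the finite set of predicates $P^1,\ldots,P^t$ and define, for every pair $(s,m)$, an equivalence relation $\equiv_{s,m}$ on words where $u \equiv_{s,m} v$ iff Duplicator wins the $(s,m)$-game on $(u,v)$ restricted to these predicates. Because each pebble position carries only a bounded amount of local type information (letter plus a vector of $P^i$-values), there are finitely many $\equiv_{s,m}$-classes, and I would show by induction on $s$ (with a secondary induction on $m$) that each class is defined by a single $\FO^2$ formula of quantifier depth at most $s$ and at most $m$ alternations. The inductive step is the familiar Hintikka-style construction: a formula characterizing the $\equiv_{s,m}$-type of $u$ is the conjunction, over both starting sides and over all possible Duplicator replies, of existentials/universals of the $\equiv_{s-1,m}$- or $\equiv_{s-1,m-1}$-type formulae of the resulting one-pebble configurations, with the alternation counter decremented precisely when the move switches structures.

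The main obstacle I anticipate is formalising the alternation bookkeeping: unlike the pure quantifier-depth game, the parameter $m$ couples the two players' moves across rounds, so the inductive invariant must record not only how many rounds and alternations remain but also which side was played last. I would handle this by indexing the types by triples $(s,m,\sigma)$ with $\sigma \in \{u,v,\bot\}$ marking the last-played side ($\bot$ before the first move), and checking that the corresponding formula construction uses $\exists$ when the next move stays on the same structure and the dual quantifier when it switches, thereby accounting for exactly one alternation per switch. Once this invariant is set up, both implications collapse to the standard induction, and the equivalence of the theorem follows by taking $L$ to be the finite union of $\equiv_{s,m}$-classes contained in $L$ (for $s$ large enough that the hypothesised winning strategies separate $L$ from $L^{\mathrm{c}}$).
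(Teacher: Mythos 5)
The paper gives no proof of this theorem at all---it is stated as a well-known result ``easily adapted'' from Libkin's book---so there is no argument of the paper's to compare against; your proposal supplies exactly the standard back-and-forth proof that such a citation presupposes, and it is correct in outline. Both directions are sound: Spoiler's strategy read off the negation-normal-form formula tree (playing in $u$ at existentials and in $v$ at universals, so that word switches coincide with quantifier alternations), and the Hintikka-type construction showing the game equivalence $\equiv_{s,m}$ has finite index with each class definable at quantifier depth $s$ and $m$ alternations, with your bookkeeping of the last-played side being the right device to make the alternation count match exactly.
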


	This theorem is our main interface to logic in order to establish Crane Beach-like results.
	The proof method we are going to sketch is a  rather classical \emph{back-and-forth} construction.
	As we mention before, the next result is also a direct consequence of known algebraic
	characterisations of these fragments~\cite{KL13}.
	\begin{proposition} \label{prop:succ}
		For any $m$, languages with a neutral letter in $\FO_{m}^2[<,+1]$
		are definable in $\FO_m^2[<]$.
	\end{proposition}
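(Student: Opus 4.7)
The plan is to prove the contrapositive via the \EF game characterisation of Theorem~\ref{theo:ef}: assume $L$ has neutral letter $c$ and $L \notin \FO_m^2[<]$, and conclude $L \notin \FO_m^2[<,+1]$. For any bound $s$, Theorem~\ref{theo:ef} supplies a pair $(u,v) \in L \times L^c$ on which Duplicator wins the $s$-round $m$-alternation $\{<\}$-game. I would transport this winning strategy to a padded pair $(u',v') \in L \times L^c$ playing the $\{<,+1\}$-game, which yields $L \notin \FO_m^2[<,+1]$ by the other direction of the theorem.

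I would pad by inserting $N := 2s+2$ copies of the neutral letter $c$ between every pair of consecutive letters of $u$ and at the endpoints, producing
$$u' \;=\; c^N u_0\, c^N u_1 \cdots c^N u_{|u|-1}\, c^N,$$
and similarly $v'$. Neutrality of $c$ gives $(u',v') \in L \times L^c$. Call a position \emph{original} if it carries one of the letters $u_i$ or $v_j$, and \emph{padded} otherwise; every original position is then flanked by $c$-labelled padded positions, so the $+1$-neighbours of any original position are $c$'s.

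Duplicator's strategy in the $\{<,+1\}$-game on $(u',v')$ simulates her $\{<\}$-strategy on $(u,v)$ round-by-round, preserving alternations one-to-one so that Spoiler's $m$-alternation budget is respected. The invariant to maintain is: whenever Spoiler's pebble sits on an original position, Duplicator's pebble sits on the original position prescribed by her $\{<\}$-strategy on $(u,v)$; whenever Spoiler's pebble sits on a padded position, Duplicator's pebble sits on a padded position with the same signed offset to the nearest original position (or endpoint) as Spoiler's pebble has in its word.

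The chief technical obstacle is preserving the $+1$ relation. When one pebble lies on an original position, the relation $+1$ with the other pebble can hold only if that other pebble is a $c$-padded neighbour; by the invariant, Duplicator's pebbles satisfy the same relation. When both pebbles lie in padded blocks, the offset-matching convention transports both $<$ and $+1$ exactly. When Spoiler places a fresh pebble adjacent to the existing one, the choice $N \geq 2s+2$ ensures Duplicator has enough room in the corresponding padded block to mirror the move without colliding with original positions over the remaining rounds. This sustains the invariant for all $s$ rounds and gives Duplicator's win, completing the proof.
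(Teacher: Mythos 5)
Your proposal is correct and takes essentially the same route as the paper: pad the pair $(u,v)$ with $\Theta(s)$ neutral letters so that non-neutral positions are unreachable from one another via $+1$ within $s$ rounds, and run a round-by-round back-and-forth simulation between the game on the padded pair and the game on the original pair. The only difference is one of presentation — you argue contrapositively by lifting Duplicator's $\{<\}$-strategy up to the padded $\{<,+1\}$-game, whereas the paper pushes Spoiler's $\{<,+1\}$-strategy down — but these are dual formulations of the same simulation.
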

	\begin{proof}[Sketch of proof]
		Let $L$ be a language definable in $\FO_{m}^2[<,+1]$ and assume that it has a neutral letter $c$.
			Thanks to Theorem~\ref{theo:ef}, there exist integers $s$ and $k\leq m$  such that
			Spoiler has a winning strategy for the
		two-pebble game with $s$ rounds and $k$ alternations on $(u,v)$, with $(u,v)\in L\times L^\text{c}$.
		We construct two words $u'$ and $v'$ by inserting $2s$ letters $c$ between
		each position (including the beginning and the end of the words).
		As $c$ is a neutral letter, we have $(u',v')\in L\times L^\text{c}$ and
		therefore  Spoiler has a winning strategy for the
		two-pebble game with $s$ rounds and $k$ alternations. Remark that the successor relation
		on $(u',v')$ is useless since the non-neutral letters are not reachable from each other in less
		that $s$ rounds. Therefore one can translate the Spoiler's wining
		strategy on $(u',v')$ on a wining strategy that does not use the successor relation. This
		wining strategy can then be translated in a wining strategy on $(u,v)$.
		We then conclude thank to Theorem~\ref{theo:ef}.
	\end{proof}
\section{Main Result}\label{section:main}
	We now investigate the Crane Beach conjecture in the specific case of $\FO^2$ equipped
	with numerical predicates of finite degree.  Throughout this section, we  borrow from
	the vocabulary of   graph theory in order to express properties on the structure of numerical predicates.
	Indeed, a binary numerical predicate can be understood as a family of graphs.
	Furthermore, if the predicate is uniform, it can be viewed as a single infinite graph
	where the set of vertices is $\N$.
	Let  $P$ be a uniform numerical predicate.
	The \emph{degree} of a position
		$k$ for~$P$, denoted by $d_P(k)$, is the size of the
		 set of all integers connected to $k$ \emph{via} $P$. More formally
		$$d_P(k) = |\left\{ j \mid (k,j)\in P \text{ or }(j,k)\in P\right\}|\enspace.$$

	The notion of locality is one of the most
	effective tools for using the \EF games. One way of
	introducing locality is to restrict the degree of the signature.
	A uniform binary predicate $P$ has a \emph{finite degree} if all positions have a finite degree.
	We denote by $\cF$ the class of binary uniform finite-degree predicates.\\

	 \noindent\textbf{Examples:}
		\begin{itemize}
			\item The predicate $kx=y,$ $x^k=y, \ldots$ as well as the graph of
			any strictly growing function.
			\item The translated Bit predicate which is true in $(x,y)$ if the $(y-x)^\text{th}$ bit of $x$ is a one.

		\end{itemize}
	 \noindent\textbf{ Example of nonfinite-degree predicates:}
	\begin{itemize}
			\item The linear ordering.
			\item The Bit predicate which is true of $(x,y)$ if the $y^\text{th}$ bit of $x$ is a one.
			\item The $\bitadd_0$ predicate.
	\end{itemize}
	Predicates of finite degree do not include by definition uniform monadic predicates.
	However, all uniform monadic predicates can be encoded as  predicates of finite degree.
	If $P$ is monadic and uniform then
	$Q = \left\{ (x,x) \mid x\in P\right\}$ is a finite-degree predicate.

	The next theorem states that the Crane Beach conjecture for $\FO^2[\arb]$ reduces
	to solving the Crane Beach conjecture for the order, the $\bitadd_0$ predicate and the class
	of finite-degree predicates. The proof of this theorem is an adaptation of a circuit-version
	of a similar result~\cite{KM}. Because of the lack of space, the proof of this theorem
	is omitted.
	\begin{theorem}\label{thm:neut}
		Any language with neutral letter definable in $\FO^2[\arb]$ is definable
		in ${\FO^2[<,\cF,\bitadd_0]}$.
	\end{theorem}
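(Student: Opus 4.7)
The plan is to adapt the circuit-theoretic argument of~\cite{KM} to the two-variable logical setting, using the \EF-game characterisation of Theorem~\ref{theo:ef} as the translation bridge. Given a formula $\varphi\in\FO^2[\arb]$ defining a neutral-letter language $L$ with neutral letter $c$, I would list the binary numerical predicates $P_1,\ldots,P_t$ appearing in $\varphi$ and aim to replace each $P_i$, uniformly in the input length, by a Boolean combination of $<$, predicates of $\cF$, and $\bitadd_0$, so that the rewritten formula agrees with $\varphi$ on all $c$-free words --- and hence on all words by neutrality of $c$.

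First I would exploit the neutral letter to pad an input $c$-free word $w$ of length $n$ into a word $w'$ of a much larger, carefully chosen length $N=N(n)$, placing the non-neutral positions of $w'$ at controllable locations $p_0<\cdots<p_{n-1}$. Since $c$ is neutral, $w\in L$ if and only if $w'\in L$, so it suffices to understand how $P_1,\ldots,P_t$ restrict to the skeleton $\{p_0,\ldots,p_{n-1}\}$ in $w'$. For each such restriction I would decompose it into a low-degree part, captured directly by a predicate in $\cF$, and a high-degree part. The high-degree part is where $\bitadd_0$ enters: by choosing the padding so that the $p_i$ line up with binary-friendly addresses, incidence to a high-degree vertex can be encoded by the binary expansion of its endpoint, and $\bitadd_0$ combined with $<$ and auxiliary finite-degree predicates gives $\FO^2$ enough power to decode such encodings in constant depth. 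A Ramsey-style extraction over the padding parameters then ensures that a single rewriting works uniformly across all input lengths, producing a formula in $\FO^2[<,\cF,\bitadd_0]$ equivalent to $\varphi$ on the neutral-letter language $L$.

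The main obstacle I expect is the treatment of the high-degree part. An arbitrary $P_i\in\arb$ has no a priori combinatorial structure, and one must argue that after a well-chosen padding, its heavy incidences on the skeleton can always be expressed through an $\bitadd_0$-driven addressing scheme. This is precisely the content supplied by the circuit version of~\cite{KM} in its native setting, and the delicate part of the adaptation is to re-prove it inside two-variable logic with access only to $<$, $\cF$ and $\bitadd_0$ without blowing up the quantifier structure of the formula. The neutral letter is essential here: it is what affords the freedom to place the $p_i$ at addresses that $\bitadd_0$ can navigate, and without this freedom the decomposition would have no leverage over the combinatorics of an arbitrary binary predicate.
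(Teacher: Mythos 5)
There is a genuine gap: the step you flag as ``the main obstacle'' is in fact the entire content of the theorem, and the mechanism you propose for it does not work. You suggest splitting each arbitrary predicate into a low-degree part (put into $\cF$) and a high-degree part to be ``decoded'' via $\bitadd_0$ and binary-friendly addresses. But $\bitadd_0$ is a single, fixed functional predicate ($x\mapsto x-2^{\floor{\log x}}$); it carries no information about an arbitrary $P\in\arb$, whose high-degree part has no combinatorial structure whatsoever, so there is nothing for an ``addressing scheme'' to latch onto. Likewise, moving the non-neutral letters to chosen positions $p_0<\cdots<p_{n-1}$ does not help by itself: the rewritten formula is still evaluated on the original word, so the induced predicate $(i,j)\mapsto P(p_i,p_j)$ is just another arbitrary numerical predicate unless you can say why the re-coordinatization forces it into $\cF$. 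Your appeals to Theorem~\ref{theo:ef} and to a Ramsey extraction are also misplaced here: the paper's proof of this statement is a purely syntactic formula rewriting and uses neither EF games nor Ramsey (Ramsey enters only in the proof of the main theorem).

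The missing idea is the following re-coordinatization. Pad the word (conceptually, using the neutral letter) to length $2^{i+1}$ with $i=\floor{\log n}$, and relativize every quantifier to the single dyadic window $E_0=\{2^{i-1},\ldots,2^i-1\}$, writing every position as $x+k2^{i-1}$ for $x\in E_0$ and $k\in\{-1,0,1,2\}$; the offsets $k,\ell$ are hard-coded into new predicates $Q_{k,\ell}(x,y)$. The point is that $Q_{k,\ell}$ relates only pairs lying in the same dyadic block $[2^{i-1},2^i)$, a block of finite size determined by its own arguments, so \emph{every} arbitrary predicate becomes finite-degree in the new coordinates --- no low/high-degree decomposition is needed. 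The predicates $\bitadd_0$, $\bitadd_{10}$, $\bitadd_{11}$ are used only to fetch the \emph{letter} at the shifted position $x+k2^{i-1}$ (and of these only $\bitadd_0$ fails to be finite-degree, which is why it survives in the signature), while out-of-range positions are treated as carrying the neutral letter. Without this folding of the word onto a self-locating bounded window, your decomposition has no leverage over an arbitrary binary predicate, and the proof does not go through.
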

		\begin{proof}
		Let $\varphi$ be a formula of $\FO^2[\arb]$ defining a language with a neutral letter.
		We are going to encode the behaviour of $\varphi$ on a word of size $n$ inside the segment
		$\{2^{i-1},\ldots,2^{i}-1\}$ where $i =\floor{\log{n}}$.  We separate the segment $\{0,\ldots,2^{i+1}-1\}$
		in four disjoint segments of size $2^{i-1}$ as follows:
		\begin{align*}
		\{0,\ldots,2^{i+1}-1\}= &\overbrace{\{0,\ldots,2^{i-1}-1\}}^{E_{-1}}\cup \overbrace{\{2^{i-1},\ldots,2^{i}-1\}}^{E_0}\\
		&\cup \underbrace{\{2^{i},\ldots,2^i+2^{i-1}-1\}}_{E_1}\cup \underbrace{\{2^i+2^{i-1},\ldots,2^{i+1}-1\}}_{E_2}\enspace.
		\end{align*}
		From now on, $i$ is fixed. 	Let $x_k$ denotes the position $x+k2^{i-1}$, where $x\in E_0$.
		Observe that:
		\begin{align*}
		\exists x\ \psi(x) &\equiv \bigvee_{k=-1}^2\exists x \ (x\in E_0\land \psi(x_k))\\
		\forall x\ \psi(x) &\equiv \bigwedge_{k=-1}^2\forall x\ (x\in E_0\to \psi(x_k))\enspace,
		\end{align*}
		where $\psi$ is any formula of $\FO^2[\arb]$.
		The variable $x$ of the formulas on the right are relativized to the segment $E_0$.
		Note that the formula $x\in E_0$ requires to use the linear order as well as the monadic predicate
		$\{2^n\mid n\in \N\}$. This formula express that there exists exactly one power of $2$ between
		$x$ and
		the end of the word. Formally:
		\begin{align*}
		x\in E_0 \equiv &\Big(\exists y\ y > x\land y\in \{2^n\mid n\in \N\}\Big) \land \\
		&\Big(
		\forall y\ \big(y > x\land y\in \{2^n\mid n\in \N\}\big) \to \big(\forall x\ x >y\to x\not\in \{2^n\mid n\in \N\}\big)\Big)
		\end{align*}
		We perform this transformation on all the quantifications of $\varphi$ and obtain a new formula
		where all variables are relativized to the segment $E_0$.

		It is quiet possible that some variable $x_k$ encodes a position that
		is larger than the size of the input word. These positions will be handled
		by considering that they are labelled by neutral letters.
		We introduce two more predicates in $\cF$:
		\begin{enumerate}
			\item The predicate $\bitadd_{10}$ defined as follows.
			The predicate $\bitadd_{10}$  is true of $x$ and $y$
			if the binary representation of $y$ is obtained by replacing the most significant bit of $x$ by $10$.
			 More formally
			$$\bitadd_{10} =\{ (x,x+2^i)\mid x\in \N, \text{ and }i=\floor{\log(x)}\text \}\enspace.$$
			\item The predicate $\bitadd_{11}$ defined as follows.
			The predicate $\bitadd_{11}$  is true of $x$ and $y$
			if the binary representation of $y$ is obtained by replacing the most significant bit of $x$ by $11$.
			 More formally
			$$\bitadd_{11} =\{ (x,x+2^{i+1})\mid x\in \N, \text{ and }i=\floor{\log(x)}\text \}\enspace.$$
		\end{enumerate}
		Now we rewrite the atoms as follows:
		\begin{itemize}
			\item The atom $P(x_k,y_\ell)$ (resp. $P(x_k)$), with $P=(P_n)$ an arbitrary numerical predicate
			is replaced by the atom
			$Q_{k,\ell}(x,y)$ (resp. $Q_{k}(x)$) where
			$Q_{k,\ell}\in\cF$ are defined as follows:
			\begin{align*}
			Q_{k,\ell} = &\left\{ (x,y) \mid 2^i\leq x,y < 2^{i+1}\text{ for some }i\in \N\text{
			 and }\big(x+k2^{i-1},y+\ell 2^{i-1}\big)\in P_{2^{i+1}-1}\right\}\\
			Q_{k} = &\left\{ x \mid 2^i\leq x < 2^{i+1}\text{ for some }i\in \N\text{
			 and }x+k2^{i-1}\in P_{2^{i+1}-1}\right\}\enspace.
 			\end{align*}
			\item The atoms $\mathbf{a}(x_0)$ is replaced by $\mathbf{a}(x)$.
			\item For $k\neq 0$ and if $a$ is a non-neutral letter, then
			the atom $\mathbf{a}(x_k)$ is replaced by the formula:
			$$\exists y\  \bitadd_{z_k}(x,y)\land \mathbf{a}(y)\enspace,$$
			where $z_{-1}=0$, $z_{1}=10$ and $z_{2}=11$.
			\item For $k\neq 0$ and if $a$ is a neutral letter, then
			the atom $\mathbf{a}(x_k)$ is replaced by the formula:
			$$\big(\exists y\  \bitadd_{z_k}(x,y)\land \mathbf{a}(y)\big)\lor
			\big(\forall y\ \lnot \bitadd_{z_k}(x,y)\big)\enspace,$$
			where $z_{-1}=0$, $z_{1}=10$ and $z_{2}=11$.
			Intuitively, if the position encoded by $x_k$ is bigger than the size of the input word,
			it is considered as a neutral letter.
		\end{itemize}
		Remark that all the extra indices have been removed during this
		process and we have now a formula of the fragment $\FO^2[<,\cF,\bitadd_0]$. Furthermore, each
		step respect the semantic of $\varphi$, so this new formula defines the same language, which concludes
		the proof.
	\end{proof}
	We believe that
	this last theorem does not hold without the neutral-letter hypothesis. For instance, the language
	$\{u\overline{u}\mid u\in A^*\}$, where $\overline{u}$ is the reversal image of $u$, is definable in
	$\FO^2[x+y=\max]$ but we conjecture that it is not definable by using only uniform predicates, and in particular, using predicates
	in the signature $[<,\cF,\bitadd_0]$.

	We now focus on the signature $[<,\cF]$.
	To solve this problem, we will use the \emph{locality} of the class $\cF$.
		Locality is an effective tool which allows us to obtain numerous results of non-definability
		with the help of the \EF games. Unfortunately, as soon as the order is present in the signature,
		it is no longer possible to use locality results and the absence of the order
		makes the fragment far less expressive.
		We are going to show that it is possible to add the order
		whilst conserving a form of locality when the other predicates are of finite degree.

	\begin{theorem}[Main Theorem]\label{thm:deg_fin}
		Let $m\geq 0$. Any language with a neutral letter definable in  $\FO_m^2[<,\cF]$
		is definable in $\FO_m^2[<]$.
	\end{theorem}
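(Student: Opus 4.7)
The plan is to adapt the scheme of Proposition~\ref{prop:succ}, replacing its ``padding makes the successor useless'' observation with a Ramsey-theoretic argument that neutralises the finite-degree predicates. By Theorem~\ref{theo:ef} applied to $L\in\FO_m^2[<,\cF]$, there exist finitely many predicates $P^1,\ldots,P^t\in\cF$ and an integer $s$ such that Spoiler wins the two-pebble game with $s$ rounds and $m$ alternations on every $(u,v)\in L\times L^{\text{c}}$ using $\{<,P^1,\ldots,P^t\}$. My target is a round bound $s'$ such that Spoiler wins the $(s',m)$-game using only $\{<\}$ on every pair of $L\times L^{\text{c}}$, so that Theorem~\ref{theo:ef} in the opposite direction yields $L\in\FO_m^2[<]$.

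Given a pair $(u,v)\in L\times L^{\text{c}}$, I would pad each word with copies of the neutral letter $c$ to obtain $(\tilde u,\tilde v)\in L\times L^{\text{c}}$ of common length $N$, and apply finite Ramsey for $3$-uniform hypergraphs to the interval $\{0,\ldots,N-1\}$. Each triple $\{a<b<c\}$ is coloured by the Boolean pattern of membership in $P^1,\ldots,P^t$ of the six ordered pairs it supports, for at most $2^{6t}$ colours. For $N$ large enough, this yields a monochromatic subset $Y$ of any desired size. Crucially, the finite-degree hypothesis forces this homogeneous colour to be edge-free: if the colour decreed $(a,b)\in P^j$ for every pair $a<b$ drawn from $Y$, then each vertex of $Y$ would acquire at least $|Y|-1$ neighbours in $P^j$, contradicting the finite degree of $P^j$ once $|Y|$ is large. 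Hence $Y$ is independent in each $P^j$, and if the padding is arranged so that all non-neutral letters of $u$ and of $v$ lie inside $Y$ in $\tilde u$ and $\tilde v$, every $P^j$-atom between two non-neutral pebbles vanishes on both sides.

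The main obstacle is handling the game rounds in which a pebble sits on a $c$-position, where a $P^j$-edge to the other pebble is still possible a priori. This is precisely where the $3$-hypergraph (rather than graph) version of Ramsey is genuinely needed: a transition between two-pebble states involves three positions --- the two previously-held pebbles and the newly placed one --- so triple-homogeneity of $Y$, rather than mere pair-homogeneity, is the correct control on how the atomic type can change from round to round. Combined with the finite-degree assumption (any position has only finitely many $P^j$-neighbours, so the vast majority of the $c$-padding is unconstrained by $P^j$), a round-by-round bookkeeping over the at most $s$ positions that can ever be pebbled allows one to mimic Spoiler's $\{<,P^j\}$-strategy on $(\tilde u,\tilde v)$ using only the order. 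The neutrality of $c$ then transfers the resulting $\{<\}$-strategy from $(\tilde u,\tilde v)$ back to $(u,v)$, and Theorem~\ref{theo:ef} concludes that $L\in\FO_m^2[<]$.
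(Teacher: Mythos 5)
Your proposal has the right ingredients (padding with the neutral letter, Ramsey for $3$-hypergraphs, the finite-degree hypothesis), but the object you colour is too weak, and this creates a genuine gap exactly at the point you flag as ``the main obstacle.'' You colour a triple $\{a<b<c\}$ by the pattern of the predicates on the six ordered pairs \emph{drawn from the triple itself}, so the homogeneous set $Y$ only controls the predicates between elements of $Y$. But the game is played on \emph{all} positions of the padded words: Spoiler's winning strategy over $\{<,P^1,\ldots,P^t\}$ may place pebbles on padding positions far outside $Y$, and the monochromaticity of $\St(Y)$ gives no control whatsoever over $P^j$-edges incident to such positions. Your proposed fix --- that triple-homogeneity controls ``how the atomic type can change from round to round'' --- does not follow, because the atomic type of a configuration involving a non-$Y$ position is simply not constrained by the colouring. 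The paper resolves this by colouring triples not with their predicate pattern but with their class under an equivalence $\sim_s$ defined via \emph{constrained two-pebble games} played on the whole intervals between the triple's elements; this class records the local game-theoretic behaviour of every position in those intervals, which is what lets Duplicator answer moves that land in the padding. Making that equivalence finite-index (so Ramsey applies) is a genuine piece of work that your sketch does not supply.

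Relatedly, you use finite degree only to force the homogeneous colour to be edge-free on $Y$ --- which is correct but would already follow from Ramsey for graphs, so it does not explain why $3$-hypergraphs are needed. In the paper, finite degree plays a different and more essential role: it makes the $s$-neighbourhoods $V(i,s)$ (built by alternately closing under the predicates and under order-intervals) finite, hence allows extracting arbitrarily many positions with pairwise disjoint neighbourhoods. It is this disjointness, not independence of the selected positions, that guarantees that a pebble moved from one block to another sees all non-order predicates evaluate to false, and that a pebble moved \emph{within} a block can be answered using local equivalence. Triples (rather than pairs) are needed because the constrained game attached to a centre $i$ is parametrised by its two neighbours $\pneg{i}$ and $\ppos{i}$, which delimit the shrinking intervals $I_{(r,s)}$ available at round $r$. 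Without a neighbourhood notion and a finite-index local-equivalence relation, the ``round-by-round bookkeeping'' you invoke is the entire remaining content of the theorem, so the proposal as it stands is not a proof.
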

	We immediately obtain the following corollary.
	\begin{corollary}
		Any language with a neutral letter definable in  $\FO^2[<,\cF]$
		is definable in $\FO^2[<]$.
	\end{corollary}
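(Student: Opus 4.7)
The plan is to generalise the template of Proposition~\ref{prop:succ}: pad $u$ and $v$ with the neutral letter and argue that the finite-degree predicates become trivial on a well-chosen region of the padding, so that Spoiler gains nothing from using them. Where Proposition~\ref{prop:succ} relied on the trivial locality of $+1$, here a genuine Ramsey argument on finite-degree predicates will be required; the corollary will then follow at once by applying the main theorem to the alternation depth $m$ of any $\FO^2[<,\cF]$ formula defining $L$.

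By Theorem~\ref{theo:ef}, fix $s$ and $P^1,\ldots,P^t\in\cF$ such that Spoiler wins the two-pebble $(s,m)$-game on every $(u,v)\in L\times L^\text{c}$ over $\{<,P^1,\ldots,P^t\}$. I want an integer $s'$ such that Spoiler wins the $(s',m)$-game over $\{<\}$ on every such pair; Theorem~\ref{theo:ef} will then conclude. I argue by contraposition. Suppose that for every $s'$ some pair $(u,v)\in L\times L^\text{c}$ admits a $<$-only Duplicator winning $(s',m)$-strategy; fix a large $s'$ and such a pair, and pad both words with long blocks $c^N$ of the neutral letter to form $(u^*,v^*)\in L\times L^\text{c}$. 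I will extend the $<$-only Duplicator strategy on $(u^*,v^*)$ into a winning Duplicator strategy for the $(s,m)$-game over $\{<,P^1,\ldots,P^t\}$, contradicting the hypothesis.

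The combinatorial core is to find homogeneous regions $R_u\subseteq u^*$ and $R_v\subseteq v^*$, entirely inside the padding, on which all $P^j$-atoms vanish. I colour each unordered triple of padding positions by the $\{P^j\}_{j\leq t}$-isomorphism type of its induced sub-hypergraph, and apply Ramsey's theorem for $3$-hypergraphs to extract monochromatic sets $R_u,R_v$ of size greater than $2s$. The finite-degree hypothesis forces the monochromatic colour to be the empty one: any non-trivial colour would give some vertex of $R$ infinitely many $P^j$-neighbours in $R$, contradicting $d_{P^j}<\infty$. A further finite-degree pruning removes from $R_u$ and $R_v$ the $P^j$-neighbours of the finitely many non-neutral positions of $u^*$ and $v^*$. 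After this, every $P^j$-atom involving at least one $R$-position evaluates to false on both sides. Duplicator's new strategy on $(u^*,v^*)$ then uses the assumed $<$-only winning strategy as a subroutine but routes every padding reply into $R_u$ or $R_v$, selecting the position of the appropriate $<$-type. Any $P^j$-atom the game probes is then automatically false whenever an $R$-position is involved; the remaining case of two non-neutral pebbles is handled by aligning the paddings so that non-neutral positions occupy a common set of integer coordinates in $u^*$ and $v^*$.

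\emph{Main obstacle.} The most delicate ingredient will be the alignment of non-neutral positions in $u^*$ and $v^*$ at identical integer coordinates while preserving the $<$-only Duplicator winning strategy: if $u$ and $v$ have different counts of non-neutral positions, I must insert extra neutral letters into the shorter word in advance and verify that the original $<$-winning strategy survives these insertions. A secondary point is quantitative: Ramsey's theorem for $3$-hypergraphs has tower-type bounds in the number of colours and in $|R|$, so $s'$ and $N$ must be chosen correspondingly large, but this is routine bookkeeping. Preservation of the alternation count $m$ throughout is immediate, since neither padding nor Ramsey extraction affects the side on which Spoiler plays at any round.
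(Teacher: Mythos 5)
You need only one line for this corollary: a formula of $\FO^2[<,\cF]$ has some finite alternation depth $m$, so Theorem~\ref{thm:deg_fin} applied to that $m$ yields definability in $\FO_m^2[<]\subseteq\FO^2[<]$. You state this reduction in your opening paragraph, and that part coincides exactly with the paper's (one-line) proof. Everything after it, however, is an attempted re-derivation of the main theorem itself, and since you clearly intend it to carry weight, I must point out that it has a genuine gap.

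The gap is in the step where Duplicator ``routes every padding reply into $R_u$ or $R_v$.'' Your Ramsey extraction and pruning only guarantee that $P^j$-atoms are false when at least one argument lies in $R_u$ (resp.\ $R_v$). But Spoiler is not confined to $R_u$: he may place his new pebble on a position $x$ of $u^*$ (padding or not) that is a $P^j$-neighbour of the currently pebbled position $y$, so that $P^j(x,y)$ holds in $u^*$; Duplicator's reply inside $R_v$ then makes the corresponding atom false in $v^*$ and loses immediately. Handling exactly such moves is what forces the paper to introduce the neighbourhoods $V(i,r)$, the two constrained games, the relation $\sim_s$, and the well-typed extraction of Corollary~\ref{cor:welltype}: the non-neutral letters are themselves placed on a Ramsey-homogeneous family of triples so that Duplicator can mirror the local predicate structure around any pebble, rather than merely land in a predicate-free zone. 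Relatedly, your colouring controls no atom between two non-neutral positions (neither $P^j(f_i,f_{i'})$ against $P^j(g_j,g_{j'})$ nor the reflexive atoms $P^j(x,x)$), and the proposed fix of aligning non-neutral positions at common integer coordinates cannot work: inserting neutral letters never changes the number of non-neutral positions, and in any case Duplicator must in general answer $f_i$ with some $g_j$ at a different coordinate, where a uniform predicate can tell the two apart. So either invoke Theorem~\ref{thm:deg_fin} and stop, or rebuild the extraction so that it homogenises the positions carrying the letters of $u$ and $v$, not just the padding.
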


	This theorem states the uselessness of finite-degree predicates for defining languages
	with a neutral letter in two-variable logic. More precisely, they do not even improve
	the logical complexity of the languages. 	Therefore, we
	immediately deduce the strictness of this hierarchy.
	Indeed, we mainly use the known facts that $\FO_m^2[<]$ is a strict hierarchy (see~\cite{IW09})
		and that each layer is stable by inverse  image of
		morphisms. This  latter fact is a requirement for having an equational description
		as given in the article~\cite{KS12}. Then, it is sufficient to take
		the inverse image of a language $L$ that separates $\FO_{m+1}^2[<]$ from $\FO_m^2[<]$
		by a morphism that maps a letter which is not in the alphabet of $L$ to the empty word.
	\section{Proof of the main theorem}\label{section:proof}
	 The principal ingredients are a notion of \emph{locality}, the \EF games and Ramsey's Theorem.
	For the remaining of the proof we fix $P^1,\ldots, P^t$ as predicates in $\cF$.
	Our objective is to prove that for any language $L$ with a neutral letter definable in
	${\FO_m^2[<,P^1,\ldots, P^t]}$, there exists $s$ such that for
	every words  $u\in L$ and $v\not \in L$, Spoiler has a winning strategy for the
	\EF game   with two pebbles, $s$ rounds and $m$ alternations on $(u,v)$ and over
	the signature 	$\{<,+1\}$. The proof is decomposed as follows.
	\begin{conditions}
		\item First, we introduce the notion of a \emph{position’s neighbourhood}.
		\item Then, we define an equivalence relation
		between triples of disjoint neighbourhoods, which will allow us
		to define the different roles that these triples could play throughout the course of the game.
		\item  We then extract triples of so-called \emph{well-typed}
		 positions, with the help of Ramsey's Theorem for $3$-hypergraphs.
		\item Finally, we will inductively construct  a winning strategy for Spoiler over the signature
		$\{<,+1\}$ that uses at most $s$ rounds and $m$ alternations.
		 Proposition~\ref{prop:succ} allows us to conclude.
	\end{conditions}

		Let $E\subseteq \N^2$ be defined by
			$\{x,y\}\in E$ if, and only if,  $x$ and $y$
			are two positions connected by one of the predicates.
			More precisely, $\{x,y\}\in E$ if and only if
			$$P^1(x,y)\lor P^1(y,x)\lor \cdots \lor P^t(x,y)\lor P^t(y,x)\enspace.$$

		The graph $(\N,E)$ is the graph behind our reasoning.
		As each predicate is of finite degree, the graph $(\N,E)$
		is also of finite degree. From this point on, we assume that the integer $s$
		(the number of rounds in the game) is fixed.

	 \subsection{Definition of neighbourhood}
	 For an integer~$i$, the usual notion of $r$-neighbourhood is defined as the
	set of integers at distance~$r$ from~$i$ in $(\N, E)$.  It captures the
	intuition that two integers with similar $r$-neighbourhoods cannot be
	distinguished in~$r$ applications of the predicates.  Adding linear order to
	the predicates, any element between two given integers is \emph{connected} by the order.  Our
	specialized notion of neighbourhood thus distinguishes between the linear
	order and the other predicates; to this end, let us  first introduce the
	\emph{closure} of a finite set $F \subseteq \mathbb{N}$ as:
 		$$\Cl{F} = \{\min F,\min F+1,\ldots,\max F\}\enspace.$$
	Then, intuitively combining at each step the use of the predicates and
	that of the order, we define the $0$-\emph{neighbourhood} of $i \in \mathbb{N}$ as:
	$$V(i,0)= \Cl{\{i\}\cup \bigcup_{\substack{k'\leq i \leq k\\ \{k',k\}\in E}}\{k',k\}}\enspace.$$
		and, inductively, the $(r+1)$-\emph{neighbourhood} of $i \in \mathbb{N}$ as:
		$$V(i,r+1)= \Cl{\bigcup_{j\in V(i,0)}V(j,r)}\enspace.$$
		Less formally, the $0$-neighbourhood of $i$ is the set of positions $j$ such that by moving a pebble
		inside this set
		it is possible to jump over $i$. We obtain immediatly that
		 $V(i,r)\subseteq V(i,r+1).$
		 \begin{lemma}\label{lem:tech_finite_deg1}
			For all integers $i$ and $k$,  $V(i,k)$ is finite.
		 \end{lemma}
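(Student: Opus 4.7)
The plan is to prove the lemma by induction on $r$, with essentially all the work concentrated in the base case $r=0$.

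For the base case, I would show that the set of edges $\{k',k\} \in E$ that straddle $i$ (meaning $k' \leq i \leq k$) is finite. The key observation is that the left endpoint $k'$ of any such edge satisfies $k' \in \{0,1,\ldots,i\}$, a finite set. Since each vertex has finite degree in $(\N,E)$ by our assumption that $P^1,\ldots,P^t \in \cF$, each such $k'$ is incident to only finitely many edges of $E$, and in particular to only finitely many edges straddling $i$. Summing, the total number of straddling edges is bounded by $\sum_{k'=0}^{i} d_E(k')$, which is finite. Hence $\{i\} \cup \bigcup_{k' \leq i \leq k,\ \{k',k\} \in E}\{k',k\}$ is a finite subset of $\N$, and its closure $V(i,0) = \Cl{\cdot}$ is a finite interval.

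For the inductive step, assume that $V(j,r)$ is finite for every $j \in \N$. Since $V(i,0)$ is finite by the base case and each $V(j,r)$ with $j \in V(i,0)$ is finite by the induction hypothesis, the union $\bigcup_{j \in V(i,0)} V(j,r)$ is a finite union of finite sets, hence finite. Applying $\Cl{\cdot}$ to a finite set yields a finite interval, so $V(i,r+1)$ is finite.

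The only nontrivial step is the bound on straddling edges in the base case; once this combinatorial observation is in place, the induction is entirely routine. I do not expect any real obstacle, but it is worth emphasising that the argument crucially uses the one-sided nature of $\N$: an edge spanning $i$ must have its left endpoint in the finite set $\{0,1,\ldots,i\}$, which is what converts the finite-degree hypothesis at each vertex into a global finiteness statement about the closure.
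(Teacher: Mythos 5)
Your proof is correct and follows essentially the same route as the paper: the base case rests on the observation that every edge straddling $i$ has its left endpoint in the finite set $\{0,\ldots,i\}$, each of which has finite degree, and the inductive step is the routine finite-union-of-finite-sets argument followed by taking the closure. The only cosmetic difference is that you bound the number of straddling edges by $\sum_{k'=0}^{i} d_E(k')$ whereas the paper bounds the positions themselves by $m=\max\bigcup_{j=0}^{i}E_j$; both capture the same idea.
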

		 \begin{proof}
		 	We prove this result by induction on $k$.
		 	In this proof, we denote by $E_j$  the set of neighbours of $j$ in the
		 	graph $(\N,E)$ for any $j\in \N$.
		 	We also remark that for $P$ a finite set, $\Cl{P}$ is also finite.
		 	\begin{itemize}
		 		\item First, the $0$-neighbourhood of $i$ is finite. For any $0\leq j\leq i$,
		 		the set $E_j$ is finite and we set
		 		$$m = \max{\bigcup_{j=0}^i E_j}\enspace.$$
		 		The $0$-neighbourhood of $i$ is then included in the segment $\{0,\ldots,m\}$
		 		which is finite.
		 		\item We assume that for any $j$, the $r$-neighbourhood of $j$ is finite and
		 		we show that the $(r+1)$-neighbourhood of  $i$
		 		finite as well.
		 		By definition
	 				$$V(i,r+1)= \Cl{\bigcup_{j\in V(i,0)}V(j,r)}\enspace.$$
				Then by induction hypothesis $V(i,0)$ and $V(j,r)$ are finite sets.
				Finally the set
				$$E=\bigcup_{j\in V(i,0)}V(j,r)$$ is a finite union of finite sets.
				This concludes the proof.
		 	\end{itemize}
		 \end{proof}
		 We now define the function
		 $g_s\colon\N\to \N$  by ${g_s(i) = \min{V(i,s)}}$.
		 \begin{lemma}
		 	 We have $\lim_i g_s(i) = +\infty$.
		 \end{lemma}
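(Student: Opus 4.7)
My plan is to prove the statement by induction on $s$, showing that for every threshold $N$ the set $\{i : g_s(i) \leq N\}$ is finite, which is equivalent to $g_s(i) \to +\infty$.

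For the base case $s=0$, I would unfold the definition: $\min V(i,0)$ is either $i$ itself or the minimum of some $k' \leq i$ for which there is an edge $\{k',k\} \in E$ with $k \geq i$. So if $g_0(i) \leq N$, then either $i \leq N$, or there exists $k' \leq N$ with a neighbour $k \geq i$. Here the finite-degree hypothesis enters: each $k' \leq N$ has only finitely many neighbours in $(\N, E)$, so the set $S_N = \bigcup_{k'\leq N} E_{k'}$ is a finite union of finite sets, hence bounded by some $M_N$. Then $i > \max(N, M_N)$ forces $g_0(i) > N$, which proves the base case.

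For the inductive step, assume $g_r(j) \to +\infty$. Note that $\min$ commutes with $\mathrm{Cl}$, so
\[
g_{r+1}(i) \;=\; \min V(i,r+1) \;=\; \min_{j \in V(i,0)} \min V(j,r) \;=\; \min_{j \in V(i,0)} g_r(j).
\]
Fix $N$. By the induction hypothesis there exists $J$ with $g_r(j) > N$ for all $j \geq J$. By the base case applied at threshold $J$, there exists $I$ with $g_0(i) \geq J$ for all $i \geq I$. For such $i$, every $j \in V(i,0)$ satisfies $j \geq g_0(i) \geq J$, hence $g_r(j) > N$, and taking the minimum gives $g_{r+1}(i) > N$. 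This closes the induction.

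The only non-routine ingredient is the base case, where the finite-degree assumption is essential: it ensures that the set of positions that can be ``pulled down'' to a small value by a single edge crossing over $i$ is bounded. Lemma~\ref{lem:tech_finite_deg1}, which says each $V(i,k)$ is finite, is a pointwise statement and is not by itself enough; the asymptotic statement here crucially uses that neighbourhoods of \emph{different} small positions together form a bounded set. Once this is in hand, the induction is a straightforward diagonal argument combining the base case at an auxiliary threshold $J$ with the inductive hypothesis.
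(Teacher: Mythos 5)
Your proof is correct, but it takes a genuinely different route from the paper's. The paper argues by contradiction: if $g_s$ were bounded by some $M$ on an infinite set $I$, then by pigeonhole some value $n\leq M$ would be attained infinitely often, and the symmetry of the neighbourhood relation (if $\min V(i,s)=n$ then $i\in V(n,s)$) would place all those infinitely many witnesses inside the single finite set $V(n,s)$, contradicting Lemma~\ref{lem:tech_finite_deg1}. You instead proceed by induction on $s$: the base case extracts a direct quantitative bound from the finite-degree hypothesis (the union $\bigcup_{k'\leq N}E_{k'}$ is finite, hence bounded, so no $i$ beyond that bound can be pulled below $N$ by a single crossing edge), and the inductive step composes this with the identity $g_{r+1}(i)=\min_{j\in V(i,0)}g_r(j)$, which is valid since taking the closure does not change the minimum. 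Both arguments are sound. Yours is self-contained and in principle effective, producing explicit thresholds, and it avoids relying on the symmetry property of $V(\cdot,s)$, which the paper invokes through a lemma reference that does not resolve to any statement actually present in the text; the paper's version is shorter and reuses the already-proved finiteness of each individual $V(n,s)$, at the price of that extra symmetry fact. Your closing remark is accurate: pointwise finiteness of the neighbourhoods is not by itself sufficient, and one needs either your uniform bound over all small positions or the paper's symmetry-plus-pigeonhole step to get the asymptotic statement.
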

		 \begin{proof}
		 	Suppose for a contradiction that there exist
		 	$M\in \N$ and $I\subseteq \N$
		 	of infinite size such that for any integer $i\in I$, $g_s(i)\leq M$.
		 	Since $I$ is infinite, there exist $n\leq M$ and a set $I'\subseteq I$
		 	of infinite size such that
		 	for every integers $i\in I'$, we have  $g_s(i) = n$.
		 	Thanks to Lemma~\ref{lem:tech_finite_deg2},
		 	$$I'\subseteq V(n,s)\enspace.$$ A contradiction arises since the set
		 	$I'$ is infinite and the set $V(n,s)$ is finite, which concludes this proof.
		 \end{proof}

		From this we immediately deduce the following corollary,
		which establishes the possibility of obtaining an arbitrarily large number of
		neighbourhoods that do not overlap.
		\begin{corollary}\label{cor:extract}
		 	For any integer $p$, there exists $X\subseteq \N$ of size $p$
		 	such that for any $i,j\in X$, the $s$-neighbourhood of  $i$ and $j$ are disjoint and
		 	separated by at least one integer.
		 \end{corollary}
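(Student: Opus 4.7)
The plan is to build $X = \{i_1 < i_2 < \cdots < i_p\}$ inductively, combining the fact $\lim_i g_s(i) = +\infty$ from the preceding lemma with the finiteness of neighbourhoods from Lemma~\ref{lem:tech_finite_deg1}.

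First I would record a structural observation on which the argument hinges: every $s$-neighbourhood is in fact an interval of consecutive integers. This follows by induction on $r$ from the definitions. Indeed $V(i,0) = \Cl{F}$ for a finite set $F$, and by definition $\Cl{F} = \{\min F, \min F+1, \ldots, \max F\}$, which is an interval. Similarly $V(i,r+1) = \Cl{\bigcup_{j \in V(i,0)} V(j,r)}$ is again a closure, hence an interval. Writing $M(i) := \max V(i,s)$, we therefore have $V(i,s) = \{g_s(i), g_s(i)+1, \ldots, M(i)\}$, with both endpoints finite.

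The inductive construction is then transparent. Pick $i_1$ arbitrarily. Having chosen $i_1 < \cdots < i_k$, let
$$B_k = \max_{j \leq k} M(i_j),$$
which is a finite integer. Since $g_s(i) \to +\infty$, there exists $N$ such that $g_s(i) > B_k + 1$ for every $i \geq N$; pick any such $i_{k+1} \geq N$ (so in particular $i_{k+1} > i_k$). Then for every $j \leq k$ we have $\min V(i_{k+1},s) = g_s(i_{k+1}) > M(i_j) + 1 = \max V(i_j,s) + 1$, so the interval $V(i_{k+1},s)$ lies strictly to the right of every previously chosen neighbourhood and is separated from each of them by at least one integer. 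Iterating this $p-1$ times yields a set $X = \{i_1, \ldots, i_p\}$ with the required property.

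There is no real obstacle here; the only point that needs care is the interval observation. Without it, choosing $g_s(i_{k+1}) > B_k + 1$ would not prevent $V(i_{k+1},s)$ from reaching back and overlapping a previously chosen neighbourhood. Once one knows each $V(\cdot,s)$ is a single interval, pushing $g_s$ past the maximum of all prior neighbourhoods suffices.
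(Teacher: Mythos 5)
Your proof is correct and follows essentially the route the paper intends: the paper derives this corollary directly from $\lim_i g_s(i) = +\infty$ without spelling out the extraction, and your induction on $k$ (pushing $g_s(i_{k+1})$ past the maximum of all previously chosen neighbourhoods) fills in exactly those details. One small remark: the interval observation, while true, is not actually needed for the step you claim it for --- since $g_s(i) = \min V(i,s)$ by definition, every element of $V(i_{k+1},s)$ is at least $g_s(i_{k+1})$, so once $g_s(i_{k+1}) > \max V(i_j,s) + 1$ the new neighbourhood cannot ``reach back'' whether or not it is an interval.
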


		 An $s$-\emph{extraction} is a set of integers,
		 such that  their $s$-neighbourhoods are disjoint and separated by at least one integer.
		 In short, they must be in accordance with the conditions of Corollary~\ref{cor:extract}.

		\subsection{An equivalence  relation for triples}

		We now introduce a notion of \emph{similarity} for the triples of neighbourhoods
		taken from the \EF two-pebble game. Let $(\pneg{i},i,\ppos{i})$ be a
		triple of integers which is an $s$-extraction.
 		More precisely, this triple satisfies that
 			\begin{enumerate}
 				\item $\pneg{i}<i<\ppos{i}$,
 				\item their $s$-neighbourhoods are disjoint and have at least one element between them.
 			\end{enumerate}
 			According to Corollary~\ref{cor:extract}, such a triple exists.
 			We set $J_s(i,\ppos{i})$ as the interval between the minimal position of the $s$-neighbourhood of
 			$i$ and minimal  position of the $s$-neighbourhood of $\ppos{i}$.
 			More formally,
		 	$$J_{s}(i,\ppos{i}) =\{\min{V(i,s)},\ldots,\min{V(\ppos{i},s)}-1\}\enspace.$$
			We also set $I_{(r,s)}( \pneg{i},\ppos{i})$ the interval in-between the maximal position
		 	of the $(s-r)$-neighbourhood of $\pneg{i}$ and the minimal position of the $(s-r)$-neighbourhood
		 	of $k$.
		 	More formally
		 	\begin{align*}
		 	I_{(r,s)}(\pneg{i},\ppos{i})& =  \{\max{V(\pneg{i},s-r)}+1,\ldots,\min{V(\ppos{i},s-r)}-1\}\enspace.
		 	\end{align*}
		 	These notations are illustrated in Figure~\ref{dess_preuve2}.
			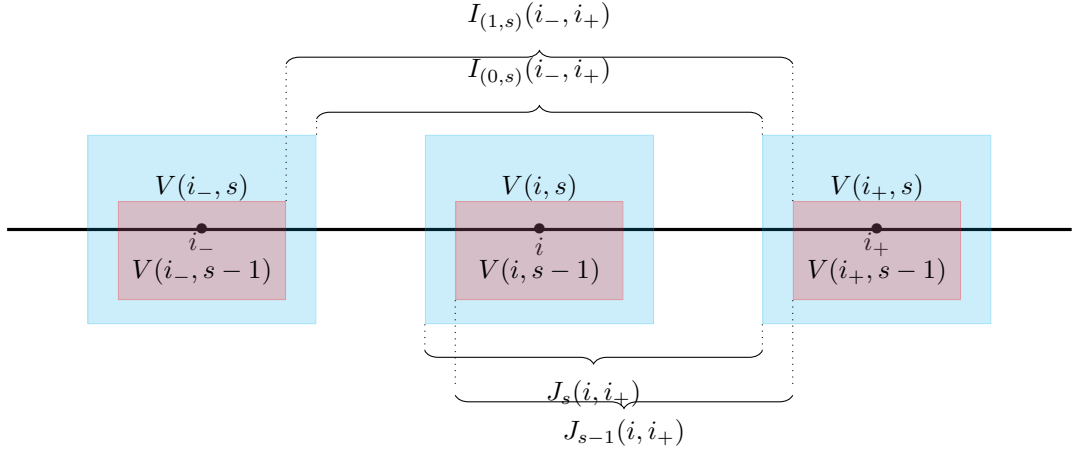
\begin{figure*}[h]
				\centering
				%\usetikzlibrary{calc,automata,arrows,chains,matrix,positioning,scopes,decorations.pathreplacing}

\begin{tikzpicture}[scale=0.7]

\node (q0){};
\node (q1)[right=14of q0] {};
\path[-, draw, very thick]  (q0) -- (q1)  node (mid) [midway] {$\bullet$} ;
\node (jp)[left= 4 of mid] {$\bullet$};
\node (j)[right= 4 of mid] {$\bullet$};
\node at (mid.south) {$i$};
\node at (jp.south) {$\pneg{i}$};
\node at (j.south) {$\ppos{i}$};

\node (vj) at ($(j)+(0,0)$) [rectangle, fill, draw,opacity = 0.2, cyan,minimum  width=3cm, minimum height=2.5cm, ] {};
\node (vj2) at ($ (j) + (0,-0.4)$) [rectangle, fill, draw,opacity = 0.2, red,minimum  width=2.2cm, minimum height=1.3cm] {};

\node (vjp) at ($(jp)+(0,0.)$) [rectangle, fill, draw,opacity = 0.2, cyan,minimum  width=3cm, minimum height=2.5cm, ] {};
\node (vjp2) at ($ (jp) + (0,-0.4)$) [rectangle, fill, draw,opacity = 0.2, red,minimum  width=2.2cm, minimum height=1.3cm] {};

\node (vi) at ($(mid)+(0,0.)$) [rectangle, fill, draw,opacity = 0.2, cyan,minimum  width=3cm, minimum height=2.5cm, ] {};
\node (vi2) at ($ (mid) + (0,-0.4)$) [rectangle, fill, draw,opacity = 0.2, red,minimum  width=2.2cm, minimum height=1.3cm] {};

\node (dot0) at (vjp2.north east) {};
\node (dot1) at (vj2.north west) {};
\draw [dotted] ($(dot0)+(0,0)$) -- ($(dot0)+(0,2.6)$); 
\draw [dotted] ($(dot1)+(0,0)$) -- ($(dot1)+(0,2.6)$); 
\node (dot2) at (vi.south west) {};
\node (dot3) at (vj.south west) {};
\draw [dotted] ($(dot2)+(0,0)$) -- ($(dot2)+(0,-0.5)$); 
\draw [dotted] ($(dot3)+(0,0)$) -- ($(dot3)+(0,-0.5)$); 

\draw [dotted] (vi2.south west) -- ($(vi2.south west)+(0,-1.6)$); 
\draw [dotted] (vj2.south west) -- ($(vj2.south west)+(0,-1.6)$); 

\node (dot4) at (vjp.north east) {};
\node (dot5) at (vj.north west) {};
\draw [dotted] ($(dot4)+(0,0)$) -- ($(dot4)+(0,0.3)$); 
\draw [dotted] ($(dot5)+(0,0)$) -- ($(dot5)+(0,0.3)$); 
 \draw [decorate, decoration={brace,amplitude=5}] ($(dot0)+(0,2.6)$) -- ($(dot1)+(0,2.6)$)node [ above=2ex,   pos =0.5] {$I_{(1,s)}(\pneg{i},\ppos{i})$};

\draw [decorate, decoration={brace,mirror ,amplitude=5}] ($(vi2.south west)+(0,-1.8)$) -- ($(vj2.south west)+(0,-1.8)$)node [ below=1.5ex,  text height = 1.5ex,   pos =0.5] {$J_{s-1}(i,\ppos{i})$};
\draw [decorate, decoration={brace,mirror ,amplitude=5}] ($(dot2)+(0,-0.5)$)-- ($(dot3)+(0,-0.5)$) node [ below=2ex,  text height = 1.5ex,   pos =0.5] {$J_{s}(i,\ppos{i})$};

 \draw [decorate, decoration={brace, ,amplitude=5}]($(dot4)+(0,0.3)$) --($(dot5)+(0,0.3)$) node [ above=2ex,  text height = 1.5ex,   pos =0.5] {$I_{(0,s)}(\pneg{i},\ppos{i})$};

\node (labi) at ($ (mid) + (0,0.8)$) {$V(i,s)$};
\node (labi2) at ($ (mid) + (0,-.8)$) {$V(i,s-1)$};
\node (labjp) at ($ (jp) + (0,0.8)$) {$V(\pneg{i},s)$};
\node (labjp2) at ($ (jp) + (0,-.8)$) {$V(\pneg{i},s-1)$};

\node (labj) at ($ (j) + (0,0.8)$) {$V(\ppos{i},s)$};
\node (labj2) at ($ (j) + (0,-.8)$) {$V(\ppos{i},s-1)$};

\end{tikzpicture}
				\caption{Neighbourhoods and segments.\label{dess_preuve2}}
				\vspace{1em}
			\end{figure*}
			Let us take two triples $(\pneg{i},i,\ppos{i})$ and $(\pneg{j},j,\ppos{j})$
			 which form two $s$-extractions with $\pneg{i}<i<\ppos{i}$ and $\pneg{j}<j<\ppos{j}$.
			These two triples of integers are \emph{equivalent}
			 if two two-pebble \emph{constrained  games} are similar.
			 We  define two different notions of \emph{constrained games} that differ only in
			 their starting sets.
			These games only use two pebbles which are confined, at the $r^\text{th}$ round, to the intervals
			$$I_{(r,s)}(\pneg{i},\ppos{i}) \text{ and }
			I_{(r,s)}(\pneg{j},\ppos{j})\enspace.$$

			For the first game, the first pebble must be placed for both Spoiler and Duplicator in the sets
			 $J_s(i,\ppos{i}) \text{ and }J_s(j,\ppos{j}).$
			For the second game the first pebble is placed by Spoiler and Duplicator in the sets
			$V(i,s) \text{ and }V(j,s).$
			If Duplicator wins these two games we can state that these two triples are {equivalent},
			which we denote as
			$(\pneg{i},i,\ppos{i})\sim_s (\pneg{j},j,\ppos{j}).$

			We now introduce formally this definition. We say that
			$(\pneg{i},i,\ppos{i})\sim_s (\pneg{j},j,\ppos{j})$
			if for all $s'\leq s$
			Duplicator wins the two following games.
		 	They are two-pebble games with  $s'$ rounds and $s'$ alternation
		 	(we consider that Spoiler
		 	may alternate as much as he wishes between the two words)
		 	on the  signature $\{<,P^1,\ldots,P^t\}$, and all positions
		 	are labelled by the same letter $a$ with the exception of positions $i$ and $j$
		 	which are labelled by the same letter $b$ distinct from $a$. Here is the formal description of the two games:
		 	\begin{conditions}
		 		\item For the first game, the first pebble of  Spoiler and the first pebble of Duplicator
		 		are constrained to the set
		 		$J_{s'}(i,\ppos{i})$ and $J_{s'}(j,\ppos{j}).$
		 		At the $r^\text{th}$ round, the players are constrained to choose positions in the
		 		sets
			 	$I_{(r,s')}(\pneg{i},\ppos{i})$ and $I_{(r,s')}(\pneg{j},\ppos{j}).$
				\begin{center}
			 	\usetikzlibrary{calc,automata,arrows,chains,matrix,positioning,scopes,decorations.pathreplacing}

\begin{tikzpicture}[scale = 1,every node/.style={transform shape}]
\def\wlength{7}
\node (u){};
\node (v) [below= 2of u] {};

\path[-, draw, thick]  (u) -- ($(u)+(\wlength,0)$) node (jp) [pos = 0.1] {$\bullet$} 
node (i) [pos = 0.5] {$\bullet$}  node (j) [pos = 0.9] {$\bullet$};

\node (jpl) at ($(jp.south)+(0,-.1)$) {$\pneg{i}$};

\node (il) at ($(i.south)+(0,-.1)$) {$i$};
\node (il2) at ($(i.north)+(0,.1)$) {$b$};
\node (jl) at ($(j.south)+(0,-.1)$) {$\ppos{i}$};

%\node (iu2l) at ($(iu2.south)+(0,-.1)$) {$(i+1)_u$};
\node (dot0) at ($(i.south west)+(2,-0.4)$) {};
%\node (dot1) at ($(iu.south west)+(0,-0.4)$) {};
%\node (dot2) at ($(iu.south west)+(5.8,-0.4)$) {};

\draw[ cyan, opacity = 0.2,fill] ($(i.north west)+(-0.4,0.4)$) rectangle  (dot0);
\draw[ red, opacity = 0.2,fill] ($(jp.north west)+(-0.4,0.4)$) rectangle  ($(jp.south west)+(1,-0.4)$);
\draw[ red, opacity = 0.2,fill] (dot0) rectangle  ($(jp.north west)+(6.2,0.4)$);

%\draw[dashed] (dot1) -- ($(dot1) +(0,2)$);
\node (spoiler1) [right = 0.5of i] [draw,circle , fill, red!40!white,opacity=1, label, minimum size = 20 ]{};
\node at (spoiler1) {$1$};
\node (spoiler2) [left =0.9 of i] [draw,circle , fill, red!40!white,opacity=1, label, minimum size = 20 ]{};
\node at (spoiler2) {$2$};
\node (ndot) at  ($(i.west)+(-0.4,0)$) {};
\node (ndot1) at  ($(i.west)+(2,0)$) {};
\node (ndot3) at  ($(jp.west)+(1,0)$) {};
\draw [decorate, decoration={brace,,amplitude=5}] ($(ndot)+(0,1)$) -- ($(ndot1) +(0,1)$) node [ above=0.5em,  text height = 1.5ex,   pos =0.5] {$J_{s'}(i,\ppos{i})$};
\draw [decorate, decoration={brace,,amplitude=5}] ($(jp.north west)+(-0.4,0.7)$)-- ($(jp.north west)+(1,.7)$) node [ above=0.5em,  text height = 1.5ex,   pos =0.5] {$V(\pneg{i},s')$};
\draw[dotted] ($(jp.north west)+(-0.4,-0.1)$) -- +(0,0.7);
\draw[dotted] ($(jp.north west)+(1,-0.1)$) -- +(0,0.7);
\draw [decorate, decoration={brace,,amplitude=5}] ($(ndot1) +(0,1)$) -- +(1.45,0) node [ above=0.5em,  text height = 1.5ex,   pos =0.5] {$V(\ppos{i},s')$};
\draw[dotted] ($(jp.west)+(6.2,-0.1)$) -- +(0,1.2);

\draw[dotted] (ndot) -- +(0,1);
\draw[dotted] (ndot1) -- +(0,1);

\draw [decorate, decoration={brace,mirror,amplitude=5}] ($(ndot3)+(0,-1)$) -- ($(ndot1) +(0,-1)$) node [ below=0.5em,  text height = 1.5ex,   pos =0.5] {$I_{(0,s')}(\pneg{i},\ppos{i})$};
\draw[dotted] (ndot3) -- +(0,-1);
\draw[dotted] (ndot1) -- +(0,-1);

%{$I_{(0,s)}(\pneg{i},\ppos{i})$};
%\draw[dashed] ($(dot0) +(0,-1.1)$) -- ($(dot0) +(0,2)$);
%\draw[dashed] ($(jp.north west)+(,1)$) -- +(0,-1.3);
%\draw[dashed] ($(i.north west)+(-1,-2)$)-- +(0,1.8);
%
%\draw [decorate, decoration={brace,mirror,amplitude=5}] ($(i.north west)+(-1,-2)$) -- ($(dot0) +(0,-1.1)$) node [ below=4ex,  text height = 1.5ex,   pos =0.5] {$J_{s}(i,\ppos{i})$};
%
%\path[-, draw, thick]  (v) -- ($(v)+(\wlength,0)$);

\end{tikzpicture}
			 	\end{center}
			 	\item For the second game, the Spoiler's first pebble and the first pebble of Duplicator
		 		are constrained to $V(i,s')$ and $V(j,s')$.
		 		At the $r^\text{th}$ round, the players are constrained to play in the sets
			 	$I_{(r,s')}(\pneg{i},\ppos{i})$ and $I_{(r,s')}(\pneg{j},\ppos{j}).$
			 	\begin{center}
			 	\usetikzlibrary{calc,automata,arrows,chains,matrix,positioning,scopes,decorations.pathreplacing}

\begin{tikzpicture}[scale = 0.8,every node/.style={transform shape}]
\def\wlength{10}
\node (u){};
\node (v) [below= 2of u] {};

\path[-, draw, thick]  (u) -- ($(u)+(\wlength,0)$) node (jp) [pos = 0.1] {$\bullet$} 
node (i) [pos = 0.5] {$\bullet$}  node (j) [pos = 0.9] {$\bullet$};

\node (jpl) at ($(jp.south)+(0,-.1)$) {$\pneg{i}$};
\node (il) at ($(i.south)+(0,-.1)$) {$i$};
\node (il2) at ($(i.north)+(0,.1)$) {$b$};
\node (jl) at ($(j.south)+(0,-.1)$) {$\ppos{i}$};

%\node (iu2l) at ($(iu2.south)+(0,-.1)$) {$(i+1)_u$};
\node (dot0) at ($(i.south west)+(3,-0.4)$) {};
%\node (dot1) at ($(iu.south west)+(0,-0.4)$) {};
%\node (dot2) at ($(iu.south west)+(5.8,-0.4)$) {};

\draw[ cyan, opacity = 0.2,fill] ($(i.north west)+(-1,0.4)$) rectangle ($(i.south west)+(1.4,-0.4)$);
\draw[ red, opacity = 0.2,fill] ($(jp.north west)+(-0.6,0.4)$) rectangle  ($(jp.south west)+(,-0.4)$);
\draw[ red, opacity = 0.2,fill] ($(j.north west)+(-0.55,0.4)$) rectangle  ($(j.south west)+(,-0.4)$);

%\draw[dashed] (dot1) -- ($(dot1) +(0,2)$);
\node (spoiler1) [right = 0.1 of i] [draw,circle , fill, red!40!white,opacity=1, label, minimum size = 20 ]{};
\node at (spoiler1) {$1$};
\node (spoiler2) [left =1.2 of i] [draw,circle , fill, red!40!white,opacity=1, label, minimum size = 20 ]{};
\node at (spoiler2) {$2$};

\draw [decorate, decoration={brace,,amplitude=5}] ($(jp.north west)+(,1.)$) --  ($(j.north west)+(-0.55,1)$) node [ above=0.5em,  text height = 1.5ex,   pos =0.5] {$I_{(0,s')}(\pneg{i},\ppos{i})$};
\draw[dotted] ($(jp.north west)+(,-0.2)$) -- +(0,1.2); 
\draw[dotted] ($(j.north west)+(-0.55,-0.2)$) -- +(0,1.2); 

\draw [decorate, decoration={brace,mirror,amplitude=5}] ($(i.north west)+(-1,-1)$) --  ($(i.north west)+(1.4,-1)$) node [ below=0.5em,  text height = 1.5ex,   pos =0.5] {$V(i,s')$};
\draw[dotted] ($(i.north west)+(-1,0)$) -- +(0,-1); 
\draw[dotted] ($(i.north west)+(1.4,0)$) -- +(0,-1); 

%\path[-, draw, thick]  (v) -- ($(v)+(\wlength,0)$);

\end{tikzpicture}
				\end{center}
		 	\end{conditions}
		 	We say  that positions
		 	$x\in I_{(r,s')}(\pneg{i},\ppos{i})$ and $y\in I_{(r,s')}(\pneg{j},\ppos{j})$ are \emph{locally
		 	equivalent}  if Duplicator can win the two restricted games when the pebbles are at these positions.
		 	{The property presented in the following lemma can be deduced from the definitions and
		 	will be useful later.}

			\begin{lemma}\label{lem:tech_proof}
				Let $(\pneg{i},i,\ppos{i})$ an $s$-extraction.
					For every $0\leq r\leq s$, we have the following
					$$J_{s-r}(\pneg{i},i)\cup J_{s-r}(i,\ppos{i})= V(\pneg{i},s-r)\cup I_{(r,s)}(\pneg{i},\ppos{i})\enspace.$$
			\end{lemma}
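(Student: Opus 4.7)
The plan is to reduce the claimed equality to a chain of set-equalities between intervals of integers, once one makes the following preliminary observation: for every $j$ and $k$, the neighbourhood $V(j,k)$ is a closed interval of $\N$. This follows immediately from the fact that the closure operator $\Cl{\cdot}$, which wraps the base case and every inductive step of the definition of $V$, produces by construction a set of the form $\{\min, \ldots, \max\}$. Consequently $V(\pneg{i}, s-r) = \{\min V(\pneg{i}, s-r), \ldots, \max V(\pneg{i}, s-r)\}$, and analogously for the other neighbourhoods appearing in the statement.

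Granting this, I would expand both sides directly from the definitions. The left-hand side is the union
$$J_{s-r}(\pneg{i}, i) \cup J_{s-r}(i, \ppos{i}) = \{\min V(\pneg{i}, s-r), \ldots, \min V(i, s-r) - 1\} \cup \{\min V(i, s-r), \ldots, \min V(\ppos{i}, s-r) - 1\},$$
and since the two intervals are adjacent, their union is the single contiguous interval $\{\min V(\pneg{i}, s-r), \ldots, \min V(\ppos{i}, s-r) - 1\}$. The right-hand side unfolds to
$$V(\pneg{i}, s-r) \cup I_{(r,s)}(\pneg{i}, \ppos{i}) = \{\min V(\pneg{i}, s-r), \ldots, \max V(\pneg{i}, s-r)\} \cup \{\max V(\pneg{i}, s-r)+1, \ldots, \min V(\ppos{i}, s-r) - 1\},$$
which is again contiguous and equals the same interval.

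The only point to verify carefully is that both unions really are gap-free and that the second endpoints are well-ordered (so neither interval is empty). This uses the $s$-extraction hypothesis on $(\pneg{i}, i, \ppos{i})$: because the $s$-neighbourhoods of $\pneg{i}$, $i$ and $\ppos{i}$ are pairwise disjoint and separated by at least one integer, and because $V(j, s-r) \subseteq V(j, s)$, one obtains $\max V(\pneg{i}, s-r) < \min V(i, s-r)$ and $\max V(i, s-r) < \min V(\ppos{i}, s-r)$. These inequalities guarantee simultaneously that the two $J$-intervals meet without gap and that $I_{(r,s)}(\pneg{i}, \ppos{i})$ starts precisely at $\max V(\pneg{i}, s-r) + 1$. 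I do not expect a genuine obstacle: once the interval character of $V(\cdot, \cdot)$ is noted, the lemma is pure bookkeeping.
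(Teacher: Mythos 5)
Your proof is correct, and the paper itself gives no argument for this lemma beyond the remark that it ``can be deduced from the definitions,'' so your careful unfolding (noting that every $V(j,k)$ is a closed interval, that $j\in V(j,k)\subseteq V(j,s)$, and that the $s$-extraction hypothesis orders the three neighbourhoods so both unions collapse to the interval $\{\min V(\pneg{i},s-r),\ldots,\min V(\ppos{i},s-r)-1\}$) is precisely the intended bookkeeping. Nothing is missing.
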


			We now prove that $\sim_s$  is a {finite-index equivalence} relation.
			This is a rather classical result for this type of object in finite model theory.
			We remark  that the equivalent classes can be seen as the sets of true formulae for each triple in a logic
			adapted to the two restricted games. Thus,  two triples would be equivalent if they satisfy the
			same formulae of quantifier depth  {less than} $s$.
			As the number of formulae is finite, we can easily deduce that $\sim_s$
			equivalence relation.

		 	\begin{lemma}
		 		The relation $\sim_s$ is an equivalence relation of finite index.
		 	\end{lemma}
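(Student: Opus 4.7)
First I would dispatch the three defining properties of an equivalence relation. Reflexivity is immediate: on a triple $(\pneg{i},i,\ppos{i})$ against itself, Duplicator's copy-cat strategy wins each of the two constrained games for every $s'\leq s$, since Spoiler's move already sits in the required interval. Symmetry is built into the definition, as the two games are stated symmetrically in the two triples. For transitivity, I would compose Duplicator's strategies in the usual back-and-forth way: given winning strategies on $(T_1,T_2)$ and on $(T_2,T_3)$ for the same $s'$, Duplicator plays on $(T_1,T_3)$ by maintaining a ``shadow'' pebble configuration on $T_2$; whenever Spoiler moves on $T_1$ (resp.\ $T_3$), Duplicator first consults the $(T_1,T_2)$-strategy (resp.\ $(T_2,T_3)$-strategy) to obtain a response on $T_2$, then the other strategy to obtain the answer on $T_3$ (resp.\ $T_1$). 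The interval constraints are preserved because all three triples share the same round index $r$ and hence the same relativization discipline.

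To establish finite index, I would set up a rank-$s$ Hintikka-style formula $H_{(\pneg{i},i,\ppos{i})}$ describing exactly what Duplicator must preserve in the two constrained games starting from the triple. The inductive definition mirrors the games: at rank $0$, $H$ records the isomorphism type of the initial pebble configuration in the two starting sets ($J_{s'}(i,\ppos{i})$ for the first game, $V(i,s')$ for the second), expressed as a Boolean combination of atoms using only the letter predicates, $<$ and $P^1,\ldots,P^t$ between the two pebbles; at rank $r+1$, $H$ is a Boolean combination of formulas of the form $\exists x\,(\textrm{pebble lies in }I_{(r+1,s')}(\pneg{i},\ppos{i}))\wedge H'$, where $H'$ ranges over rank-$r$ Hintikka formulas. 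Crucially, the relativizations to $I_{(r,s')}$, $J_{s'}$, and $V(\cdot,s')$ are themselves definable from $\pneg{i}$, $i$, $\ppos{i}$ via a bounded number of applications of $\Cl{\cdot}$ and of the finitely many predicates $P^1,\ldots,P^t$, so they fit inside the Hintikka apparatus without enlarging the signature.

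With this setup the standard EF argument applies: at each rank there are only finitely many Hintikka formulas up to logical equivalence (there are finitely many atomic types, and the number of Boolean combinations of a finite set of rank-$r$ formulas is finite), and two triples are $\sim_s$-equivalent if and only if they agree on all rank-$s'$ Hintikka formulas for every $s'\leq s$. Hence $\sim_s$ has only finitely many classes.

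The subtle step is the second one, ensuring that the round-dependent interval constraints can really be captured inside a finite family of formulas. I would handle this by indexing the Hintikka construction not only by the remaining number of rounds but also by a ``current constraint set,'' and by verifying that this set is determined uniformly from $(\pneg{i},i,\ppos{i})$ and $r$, so that it need not be tracked as extra data beyond the finite set of formulas already considered.
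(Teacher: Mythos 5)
Your plan matches the paper's proof: the same decomposition into (i) showing $\sim_s$ is an equivalence relation, with reflexivity and symmetry immediate and transitivity obtained by composing Duplicator's winning strategies through a shadow game on the middle triple, and (ii) finite index via a rank-bounded inductive invariant taking finitely many values, with agreement on the invariant implying equivalence. The only difference is presentational: where you use Hintikka-style formulas, the paper explicitly constructs an $r$-type (a tuple of atomic truth values at rank $0$, and a set of pairs of a connection type with a lower-rank type at rank $r+1$, collected over the two starting sets $J_{s'}(i,\ppos{i})$ and $V(i,s)$), a reformulation the paper itself remarks is equivalent to the logic-based one.
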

 				We decompose this lemma in two intermediary result.
		 	\begin{lemma*}
		 		The relation $\sim_s$ is an equivalence relation.
		 	\end{lemma*}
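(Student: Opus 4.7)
The plan is to verify the three defining properties of an equivalence relation separately, treating the two constrained games (the one where the first pebble lies in $J_{s'}$ and the one where it lies in the $s'$-neighbourhood of the marked centre) in parallel, since the arguments for both are identical.

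For reflexivity, I would exhibit the \emph{copycat} strategy for Duplicator: at each round Duplicator plays on the opposite word exactly the position Spoiler just played on the first word. Since the starting sets, the round-by-round admissible intervals $I_{(r,s')}$, and the two distinguished positions $i,b$ are identical in the two copies of the triple, the induced substructures after each move are trivially isomorphic, so Duplicator never loses. For symmetry, I would appeal to the intrinsic symmetry of the \EF game: Spoiler is free to pick \emph{either} word at each round, so a Duplicator strategy that wins the game played on $\big((\pneg{i},i,\ppos{i}),(\pneg{j},j,\ppos{j})\big)$ can be read as a strategy for the game on $\big((\pneg{j},j,\ppos{j}),(\pneg{i},i,\ppos{i})\big)$ simply by swapping the names of the two structures. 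The constrained starting sets and the round-by-round pebble constraints are defined independently on each triple, hence the swap preserves admissibility.

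For transitivity, the standard trick is strategy composition through an intermediate triple. Suppose $(\pneg{i},i,\ppos{i}) \sim_s (\pneg{j},j,\ppos{j})$ via strategy $\sigma_1$ and $(\pneg{j},j,\ppos{j}) \sim_s (\pneg{k},k,\ppos{k})$ via strategy $\sigma_2$, and consider a play on the pair $(\pneg{i},i,\ppos{i}),(\pneg{k},k,\ppos{k})$. Duplicator maintains an auxiliary \emph{virtual} pebble configuration on the middle triple: whenever Spoiler moves on the $i$-side, Duplicator uses $\sigma_1$ to compute an answer on the virtual $j$-side, then uses $\sigma_2$ on that virtual move to compute an answer on the $k$-side, and symmetrically if Spoiler moves on the $k$-side. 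Each intermediate move lies in the admissible set $I_{(r,s')}(\pneg{j},\ppos{j})$ (resp.\ $J_{s'}(j,\ppos{j})$ or $V(j,s')$ for the first move) by the definitions of $\sigma_1,\sigma_2$, so both virtual subgames remain valid. Isomorphism of the pebbled substructures on $i$ and $k$ then follows by composing the isomorphisms guaranteed by $\sigma_1$ and $\sigma_2$ at each round.

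The main subtlety — and the only point that requires care — is that in the composed play Duplicator must \emph{simulate} two separate games that each allow Spoiler to alternate between words freely (since the defining constrained games permit $s'$ alternations in $s'$ rounds). I would check that, because neither subgame restricts how often Spoiler may switch, any move Spoiler makes in the composite game translates cleanly to a move by a ``simulated Spoiler'' in whichever subgame is currently relevant, so the winning conditions of $\sigma_1$ and $\sigma_2$ are never violated. Once this bookkeeping is in place, transitivity follows and the lemma is complete; the finite-index part will be handled in the companion lemma already announced in the text.
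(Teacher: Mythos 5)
Your proposal is correct and follows essentially the same route as the paper: reflexivity and symmetry are immediate (the paper simply declares them clear, where you spell out the copycat and name-swapping arguments), and transitivity is established by the same composition of Duplicator strategies through a virtual play on the intermediate triple. The extra care you take with Spoiler's unrestricted alternation in the constrained subgames is a point the paper leaves implicit, but it does not change the argument.
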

		 	\begin{proof}
		 		The $\sim_s$ relation is clearly symmetrical and reflexive. Let $x,y$ and $z$ be triples
		 		forming an $s$-extraction such that
				$x\sim_s y$ and  $y\sim_s z$.
		 		We now show that $x\sim_s z$.
		 	    First we denote by  $S_r(x)$, $S_r(y)$ and
		 		$S_r(z)$ the authorized positions for these triples at the $r^\text{th}$ round.
		 		We are going to play the following three games simultaneously with $s'\leq s$.
		 		\begin{conditions}
		 		\item The first  on $S_{s'}(x)$ and $S_{s'}(y)$.
		 		\item The second on $S_{s'}(y)$ and $S_{s'}(z)$.
		 		\item The third on $S_{s'}(x)$ and $S_{s'}(z)$.
				\end{conditions}
				For the first two, Duplicator has a winning strategy.
				We  use it to construct a winning strategy for the third game.
				Let $r\leq s'$ and assume that Spoiler plays $s_1$ in $S_r(x)$ on the third game at round $r$.
				We simulate Spoiler’s choice by playing a position $s_1$ in $S_r(x)$ in the first game.
				Duplicator  then responds by following his winning strategy and by choosing
				a position $s_2$ in $S_2(y)$ for the first game.
				We then simulate Spoiler’s choice in position $s_2$ of $S_r(y)$ for the second game.
				Once again, Duplicator  answers with his winning strategy and chooses a  position $s_3$
				in $S_r(z)$. Finally, we  choose this  position $s_3$
				to respond to Spoiler’s choice in the third game.

		 		By following Duplicator’s strategies, we immediately deduce
		 		that Duplicator also has a strategy for the third game.

		 	\end{proof}
			We now prove that this relation has  finite index.
		 	\begin{lemma*}
		 		The equivalence relation $\sim_s$ has  finite index.
		 	\end{lemma*}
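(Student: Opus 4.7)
The approach I would take follows the hint sketched in the paragraph just before the statement: view $\sim_s$-equivalence as logical equivalence in a bounded-depth two-variable logic tailored to the two constrained games, and conclude from the finiteness of that logic up to logical equivalence. The crucial enabler is Lemma~\ref{lem:tech_finite_deg1}, which guarantees that all sets relevant to the games --- the starting sets $J_s(i,\ppos{i})$ and $V(i,s)$, and the round-$r$ arenas $I_{(r,s)}(\pneg{i},\ppos{i})$ --- are finite, so that each play of a constrained game ranges over a finite substructure of $(\N,<,P^1,\ldots,P^t)$.

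More precisely, I would proceed by induction on $s$, inductively defining a ``Hintikka type'' $\tau_s(\pneg{i}, i, \ppos{i})$ of an $s$-extraction triple. The base case $\tau_0$ records the atomic data available before any move: the letter information at the distinguished positions and the pattern of $P^1,\ldots,P^t$-edges inside the (finite) set $V(\pneg{i},0)\cup V(i,0)\cup V(\ppos{i},0)$. For the inductive step, $\tau_{s+1}$ is defined as a pair of conjunctions, one for each of the two constrained games: for every position $p$ that Spoiler could legally choose in the allowed set (either $J_{s+1}(i,\ppos{i})$, $V(i,s+1)$, or $I_{(r,s+1)}(\pneg{i},\ppos{i})$ --- all finite), one records the type $\tau_s$ of the resulting pebbled sub-configuration, where the next-round arena shrinks in the prescribed way $I_{(r,s)}\supseteq I_{(r+1,s)}$. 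Up to logical equivalence only finitely many $\tau_{s+1}$'s can exist, since each conjunct is drawn from the finite set of $\tau_s$-types provided by the induction hypothesis and the number of inequivalent Boolean combinations is finite as well.

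The main technical point --- and the only one requiring a careful argument --- is verifying that two triples sharing the same $\tau_s$ are actually $\sim_s$-equivalent, i.e.\ that the types are complete invariants. This is a classical back-and-forth: whenever Spoiler plays $p$ in one triple, the conjunct of $\tau_s$ witnessing $p$ must occur in the other triple's type as well, furnishing Duplicator's response $p'$; the resulting sub-configurations then share type $\tau_{s-1}$, and the induction applies inside the smaller arena $I_{(r+1,s)}$. The round-dependent shrinking of arenas and the two distinct starting-set conventions ($J_s$ versus $V(i,s)$) merely correspond to two separate Hintikka formulas being tracked in parallel, and pose no additional difficulty. The finite-degree hypothesis is invoked only once, through Lemma~\ref{lem:tech_finite_deg1}, to ensure that the ``universe'' over which Spoiler can play at each round is finite; without it, the inductive step would produce infinitary conjunctions and the argument would collapse.
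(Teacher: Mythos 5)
Your proposal is correct and is essentially the paper's own argument: the paper likewise builds inductive Hintikka-style $r$-types of positions (recording the atomic data $C(x,y)$ together with the lower-rank types realized in the relevant arena), takes the type of a triple to be the pair of sets of types realized in $J_{s'}(i,\ppos{i})$ and in $V(i,s)$, gets finiteness from the powerset construction, and gets completeness of the types by the same back-and-forth you sketch. One caveat on your meta-commentary: the paper's proof of \emph{this} lemma does not invoke Lemma~\ref{lem:tech_finite_deg1} and does not need to --- since a triple's type is a \emph{set} of position-types drawn from a finite universe, finite index follows even over infinite arenas, so your claim that the argument would ``collapse'' without finiteness of the neighbourhoods overstates its role (and, incidentally, the arenas satisfy $I_{(r,s)}\subseteq I_{(r+1,s)}$, i.e.\ they grow rather than shrink as the round index increases).
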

		 	\begin{proof}
		 		Let $S=(\pneg{i},i,\ppos{i})$ be an $s$-extraction with
		 		$$\pneg{i}<i<\ppos{i}\enspace.$$
			Rather than introducing an artificial notion of logic adapted to the restricted games,
			we build inductively a notion of \emph{type}
			 to prove that this relation of equivalence has a finite index.
		 		We set $r\ttau_S(x)$ to be the $r$-type
		 		of a position $x$ in $I_{(r,s)}(\pneg{i},\ppos{i})$, defined as follows.
		 		\begin{itemize}
		 			\item For all $x$ in $I_{(0,s)}(\pneg{i},\ppos{i})$
		 			we set $0\ttau_S(x)$ the $(t+2)$ tuple of the binary values of predicates in the signature.
		 			More formally we have
		 			  \begin{align*}
		 			 	0\ttau_S(x) =& \big(x<i,x>i,P^1(x,x),\ldots\\
		 			 	&\ldots,P^t(x,x)\big)\in \{0,1\}^{t+2}\enspace.
		 			  \end{align*}
		 			\item For $0\leq r<s$ and all $x$ in $I_{(r+1,s)}(\pneg{i},\ppos{i})$ we set
		 			$$(r+1)\ttau_S(x) = \Big\{\Big(C(x,y),r\ttau(y)\Big) \mid y\in I_{(r,s)}(\pneg{i},\ppos{i})\Big\}$$
		 			with $C(x,y)$ the binary value of predicates $P^i$ between $x$ and $y$:
		 			\begin{align*}
		 			C(x,y) &= \big(x<y,x>y,P^1(x,y),P^1(y,x),\ldots\\
		 			&\ldots,P^t(x,y),P^t(y,x)\big)\in\{0,1\}^{2t+2}\enspace.
		 			\end{align*}

		 		\end{itemize}
		 		The $s'$-type of a triple $S$ is the couple
		 		$$\Big(\{s'\ttau_S(x)\mid x\in J_{s'}(i,\ppos{i})\},\{s'\ttau_S(x)\mid x\in V(i,s)\}\Big) \enspace.$$
		 		By definition, there exists a finite number of $s'$-types of positions and so
		 		a finite number of $s'$-types of triples. By definition of the notion of type
		 		and by immediate induction we obtain that if
		 		$(\pneg{i},i,\ppos{i})$ and $(\pneg{j},j,\ppos{j})$ have the same $s'$-type then
		 		$(\pneg{i},i,\ppos{i})\sim_s (\pneg{j},j,\ppos{j})$. Therefore $\sim_s$ has finite index.
		 	\end{proof}

				Ramsey’s Theorem is a combinatorial result of graph theory often used in finite model theory.
				Here we use a version adapted to \emph{hypergraphs}. We introduce it in the context of triples,
				which is a direct reformulation of the 3-\emph{hypergraphs} variant.
				This theorem establishes that for every large {hypergraph} with coloured edges,
				it is possible to extract a sufficiently large monochrome sub-hypergraph.
				This theorem  allows us to find an arbitrarily large set of
				triples which are all pairwise equivalent for the $\sim_s$ relation.
				For a set $E$, we denote by $\St(E)$ the set of pairwise disjoint triples of $E$.

		 	\begin{theorem}[Ramsey's Theorem for $3$-hypergraphs~\cite{Ramsey30}]
		 	 Let $c$ be an integer. For any integer $p$ there exists an integer
		 	 $n$ such that for any set $S$ of size $n$ and any function
		 	 $h\colon\St(S)\to \{1,\ldots,c\}$ there exists a set $F\subseteq S$ of size
		 	 $p$ such that $h$ is constant on  $\St(F)$.
		 	\end{theorem}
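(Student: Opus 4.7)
The plan is to prove this classical theorem by the standard iterative reduction to the $2$-uniform (graph) Ramsey theorem combined with pigeonhole; this is essentially a restatement of the original argument in \cite{Ramsey30}.

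First, I would take as a black box the graph Ramsey theorem: for every $c$ and $q$ there exists $N_2(c,q)$ such that every $c$-colouring of the edges of the complete graph on $N_2(c,q)$ vertices admits a monochromatic clique of size $q$. This is itself proved by a textbook induction on $q$ using the classical ``split the remaining vertices by the colour of their edge to a fixed vertex'' argument.

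Second, given $c$ and $p$, I would build the desired monochromatic set $F$ by iterated peeling. Pick an arbitrary $x_1\in S$; the map $\{y,z\}\mapsto h(\{x_1,y,z\})$ is a $c$-colouring of pairs from $S\setminus\{x_1\}$, so provided $|S|-1\geq N_2(c,q_1)$ the graph Ramsey theorem yields a subset $S_1\subseteq S\setminus\{x_1\}$ of size $q_1$ on which this pair-colouring is constantly some $\gamma_1\in\{1,\ldots,c\}$. Pick $x_2\in S_1$ and repeat inside $S_1\setminus\{x_2\}$ to obtain $S_2$ of size $q_2$ with an associated colour $\gamma_2$, and iterate for $M$ rounds, producing nested sets $S_1\supseteq S_2\supseteq\cdots\supseteq S_M$, elements $x_1,\ldots,x_M$, and colours $\gamma_1,\ldots,\gamma_M\in\{1,\ldots,c\}$. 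Setting $q_M:=1$ and recursively $q_{k-1}:=N_2(c,q_k)+1$ makes the iteration legal, and determines a finite bound on $n$ by unrolling from round $M$ backwards.

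Third, I would apply pigeonhole to the sequence of colours $\gamma_1,\ldots,\gamma_M$. Choosing $M\geq c(p-1)+1$ forces some colour $\gamma^*$ to be realised at $p$ indices $i_1<\cdots<i_p$, and I claim $F=\{x_{i_1},\ldots,x_{i_p}\}$ is monochromatic for $h$. Indeed, every triple drawn from $F$ can be written as $\{x_{i_a},x_{i_b},x_{i_d}\}$ with $a<b<d$ in $\{1,\ldots,p\}$; the nesting $S_{i_a}\supseteq S_{i_a+1}\supseteq\cdots$ forces $x_{i_b},x_{i_d}\in S_{i_a}$, and by the monochromaticity of the pair-colouring on $S_{i_a}$ we obtain $h(\{x_{i_a},x_{i_b},x_{i_d}\})=\gamma_{i_a}=\gamma^*$, as required. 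The only nontrivial aspect is the quantitative bookkeeping of the $q_k$, which yields a tower-type bound on $n$; since only existence is required, this is the main source of complication but not a genuine obstacle.
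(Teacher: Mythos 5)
The paper does not prove this statement: it is quoted as a classical black box, cited directly to Ramsey's 1930 article, so there is no internal proof to compare yours against. Your argument is the standard and correct reduction of the $3$-uniform case to the graph Ramsey theorem: the peeling construction guarantees that the colour of any triple $\{x_{i_a},x_{i_b},x_{i_d}\}$ with $a<b<d$ depends only on its least element, and the pigeonhole over $\gamma_1,\ldots,\gamma_M$ with $M\geq c(p-1)+1$ then produces the monochromatic $p$-set. The only blemish is cosmetic: with $q_M:=1$ the set $S_M$ has no pairs, so $\gamma_M$ (and likewise the colour attached to any round whose surviving set has fewer than two elements) is not literally defined; either start the recursion at $q_M:=2$ or observe that the last two colours are never consulted, since the minimum of a triple from $F$ has index at most $i_{p-2}$. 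This does not affect the validity of the proof.
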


		 	A {well-typed} $s$-extraction is a set $X$ that is an $s$-extraction and
		 	such that all the triples of $X$ are equivalent for $\sim_s$.
		 	The following corollary is an immediate from Ramsey’s Theorem,
		 	in which $c$ is the number of $s$-types of triples and $h$ is the function
		 	that associates triple with their $s$-type.
		 	\begin{corollary}\label{cor:welltype}
		 		For all integers $p$ there exists a well-typed  $s$-extraction of size $p$.
		 	\end{corollary}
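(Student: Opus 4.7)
The plan is to combine the three main tools that have been set up: Corollary~\ref{cor:extract}, the finite-index property of $\sim_s$, and Ramsey's Theorem for $3$-hypergraphs.

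First I fix the integer $p$ and let $c$ denote the (finite) number of equivalence classes of $\sim_s$ on the set of $s$-extractions, as given by the previous lemma. Applying Ramsey's Theorem with these values of $p$ and $c$, I obtain an integer $n$ such that any $3$-uniform hypergraph on $n$ vertices whose edges are coloured with $c$ colours contains a monochromatic sub-hypergraph on $p$ vertices.

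Next, I invoke Corollary~\ref{cor:extract} to produce an $s$-extraction $S \subseteq \N$ of size $n$; by definition the $s$-neighbourhoods of the elements of $S$ are pairwise disjoint and separated by at least one integer. Note that every $3$-element subset of $S$, once ordered increasingly as $(\pneg{i},i,\ppos{i})$, is itself an $s$-extraction in the sense required for the definition of $\sim_s$, since the disjointness and separation conditions are inherited from $S$. I can therefore define a colouring $h\colon \St(S)\to\{1,\ldots,c\}$ by sending each triple to its $\sim_s$-equivalence class.

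Applying Ramsey's Theorem to $h$ yields a subset $F\subseteq S$ of size $p$ on which $h$ is constant. The set $F$ is still an $s$-extraction (as a subset of $S$), and by construction all of its increasing triples belong to the same $\sim_s$-class, so $F$ is a well-typed $s$-extraction of size $p$. The only step that required real work was the finite-index property of $\sim_s$, which has already been established; everything else is a direct application of Ramsey's Theorem, so no obstacle remains.
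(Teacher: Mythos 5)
Your proposal is correct and follows essentially the same route as the paper, which likewise obtains a large $s$-extraction via Corollary~\ref{cor:extract} and applies Ramsey's Theorem for $3$-hypergraphs with the colouring induced by the (finitely many) equivalence classes of $\sim_s$ (the paper colours by $s$-type, which refines the $\sim_s$-class, but this is an immaterial difference). Your observation that increasing triples of an $s$-extraction are themselves $s$-extractions, so that the colouring is well defined, is the only point needing checking and you handle it correctly.
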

		 	We have now presented all of the tools necessary to present a proof of Theorem~\ref{thm:deg_fin}.

			\subsection{Core of the proof}
				Let $L$  be a language with $c$ {as a neutral letter}
				 and definable in  $\FO_m^2[<,P^1,\ldots,P^t]$.
				  According to Theorem~\ref{theo:ef}, there exists an integer $s$,
				  such that for any  words $(u,v)\in L\times L^\text{c}$,
				  Spoiler has a winning strategy for the two-pebble game with  $s$ rounds and $m$ alternations
				  for the signature ${\{<, P^1,\ldots,P^t\}}$.
				  Let $(u,v)$ in $L\times L^\text{c}$ be such a pair.
				  We now construct a strategy for Spoiler using only the order and the successor.
				  Let $p=\max(|u|,|v|)+1$. According to Corollary~\ref{cor:welltype}, there exists
				 $X=\{i_0<i_1 \cdots < i_{p}\},$
				   which is a  well-typed $s$-extraction. Let $n=\max V(i_p,s)$, and
				  let $u'$ and $v'$ be two words of length $n$ and $(f_i)_{0\leq i < |u|}, (g_i)_{0\leq i < |v|}$
				  such that:
			  \begin{itemize}
			  	\item  $i_0<f_0<f_1<\cdots<f_{|u|-1}<f_{|u|} = i_p,$ and
				$i_0<g_0<g_1<\cdots<g_{|v|-1}<g_{|v|} = i_p,$
			  	\item for all integers $i$, the positions $f_i$ and $g_i$ belong to  $X$,
			 	\item  $u'_{f_i}= u_i$, $v'_{g_i} = v_i$,
			  	$f_0=g_0$ and $f_{|u|-1} = g_{|v|-1}$,
			  	\item all unassigned positions of $u'$ and $v'$ are labelled by the letter $c$.
			  \end{itemize}
			 \begin{center}
			 	%\usetikzlibrary{calc,automata,arrows,chains,matrix,positioning,scopes,decorations.pathreplacing}

\begin{tikzpicture}[scale = 0.75,every node/.style={transform shape}]
\def\wlength{18}
\node (up){$u'$};
\path[-, draw, thick]  (up) -- ($(up)+(\wlength,0)$) node (iu1) [pos = 0.1] {$\bullet$}  node (iu2) [pos = 0.25] {$\bullet$} node (uinv1)[pos = 0.36] {}  node (iu3) [pos = 0.5] {$\bullet$} node (uinv2)[pos = 0.615] {}  node (iu4) [pos = 0.75] {$\bullet$} node (iu5) [pos = 0.9] {$\bullet$};
\node (u1l) at ($(iu1.south)+(0,-.1)$) {$i_0$};
\node (u1l) at ($(iu1.north)+(0,.15)$) {$c\ \cdots\ c \ \cdots  \ c$};
\node (u2l) at ($(iu2.south)+(0,-.1)$) {$f_0$};
\node (u2l) at ($(iu2.north)+(0,.1)$) {$u_0$};
\node (u3l) at ($(iu3.south)+(0,-.1)$) {$f_i$};
\node (u3l) at ($(iu3.north)+(0,.1)$) {$u_i$};
\node (cu) at ($(uinv1.north)+(0,.25)$) {$\ \  \cdots \ \ c\ \  \cdots \ \   $};
\node (cu2) at ($(uinv2.north)+(0,.25)$) {$\ \  \cdots \ \ c\ \  \cdots \ \    $};
\node (u4l) at ($(iu4.south)+(0,-.1)$) {$f_{|u|-1}$};
\node (u4l) at ($(iu4.north)+(0,.1)$) {$u_{|u|-1}$};
\node (u5l) at ($(iu5.south)+(0,-.1)$) {$f_{|u|}=i_p$};
\node (u5l) at ($(iu5.north)+(0,.15)$) {$c\ \cdots\ c \ \cdots  \ c$};

\node (vp)[below = 0.75 of up]{$v'$};
\path[-, draw, thick]  (vp) -- ($(vp)+(\wlength,0)$) node (iv1) [pos = 0.1] {$\bullet$}  node (iv2) [pos = 0.25] {$\bullet$} node (vinv1)[pos = 0.35] {}  node (iv3) [pos = 0.45] {$\bullet$} node (vinv2)[pos = 0.58] {}  node (iv4) [pos = 0.75] {$\bullet$} node (iv5) [pos = 0.9] {$\bullet$};
\node (v1l) at ($(iv1.south)+(0,-.1)$) {$i_0$};
\node (v1l) at ($(iv1.north)+(0,.15)$) {$c\ \cdots\ c \ \cdots  \ c$};
\node (v2l) at ($(iv2.south)+(0,-.1)$) {$g_0$};
\node (v2l) at ($(iv2.north)+(0,.1)$) {$v_0$};
\node (v3l) at ($(iv3.south)+(0,-.1)$) {$g_i$};
\node (v3l) at ($(iv3.north)+(0,.1)$) {$v_i$};
\node (cv) at ($(vinv1.north)+(0,.25)$) {$\ \  \cdots \ \ c\ \  \cdots \ \   $};
\node (cv2) at ($(vinv2.north)+(0,.27)$) {$\ \  \cdots \ \ c\ \  \cdots \ \   $};
\node (v4l) at ($(iv4.south)+(0,-.1)$) {$g_{|v|-1}$};
\node (v4l) at ($(iv4.north)+(0,.1)$) {$v_{|v|-1}$};
\node (v5l) at ($(iv5.south)+(0,-.1)$) {$g_{|v|}=i_p$};
\node (v5l) at ($(iv5.north)+(0,.15)$) {$c\ \cdots\ c \ \cdots  \ c$};

\end{tikzpicture}
			\end{center}
			 If the words $u$ and $v$ are not of the same size,
			 then that could give us $f_i\neq g_i$.
			 The words $u'$ and $v'$
			 are nothing other than the words $u$ and $v$ after
			inserting neutral letters such that the non-neutral letters are on $X$.
			 We also require the first and last non-neutral letters to be in the exact same positions.

				As $c$ is a neutral letter, $(u',v')$ is in $L\times L^\text{c}$.
			 Therefore, Spoiler has a winning strategy for the two-pebble game over $s$-round and $m$-alternation
			   and the signature  $\{<,P^1,\ldots,P^t\}$.
			 We now have to construct Spoiler’s new strategy on $(u,v)$. In order to do so,
			 we  simulate the game on $(u',v')$  and construct \emph{via} induction
			 a winning strategy for Spoiler on $(u,v)$.
			 To achieve this step, we  exploit a back-and-forth mechanism between the game on $(u',v')$
			  and the game on $(u,v)$.
			  By following his winning strategy, Spoiler  chooses a position on $(u',v')$
			  which we  translate into a position in $(u,v)$.
			  Duplicator  then chooses a position in $(u,v)$
			  which we  translate on a position in $(u',v')$.
			  We repeat this process until Duplicator can no longer respond in $(u',v')$.
			  We must force Spoiler to  play moves that are distant from one another so that
			  his choices in $(u',v')$ lead to a winning strategy on
			 $(u,v)$.
			  If Spoiler’s new pebble is in a neighbourhood different to that of the previous pebble,
			  then by construction of the neighbourhoods, the numerical predicates, with the exception of
			  the order predicate, do not allow for a connection between the two positions; they do not
			  transmit \emph{information}.
			 In the following section we  always denote by $i_r$ (resp. $j_r$)
			 the position of the pebble played at the round  $r$ on $u$ (resp. $v$).
			  Likewise, we use $i'_r$ (resp. $j'_r$)
			  for the position of the pebble at the round $r$ on $u'$ (resp. $v'$).

			  For this construction to work, Spoiler should not win the game on $(u',v')$
			  before he wins it on $(u,v)$. This could however happen if Duplicator’s choices on $(u',v')$
			  are not pertinent.
			  We  avoid this situation by selecting locally equivalent positions, that is, positions where
			  Duplicator wins the restricted games introduced in the preceding section. Thus, Spoiler cannot win by
			  choosing moves that are close to the old pebbles. He is therefore forced to play some distant moves.

				When Spoiler plays on an \emph{extremal position} of the game on $(u',v')$,
				 Duplicator can always respond at the same position on the other word.
				 These moves therefore are of no interest in Spoiler’s strategy.
				 They are not used in the construction of the strategy of the game on $(u,v)$.
				 Each time  Spoiler makes such a move, the game on $(u,v)$ does not progress.
				 More specifically, if the game has not started, the pebbles
				 are not even placed and if the pebbles are already placed, they are not moved.

		 		We begin by describing the game’s first round, then we  inductively build  a strategy
		 		for the following rounds. For the first move, Spoiler’s winning strategy
		 		designates a position for the game on $(u',v')$.
		 		Through symmetry, we  assume that this is a position on $u'$.
		 		 We therefore distinguish two cases:
		 	 \begin{enumerate}
		 	 	\item This first move occurs within a segment of the form $J_s(f_i,f_{i+1})$
		 	 	for an integer  $0\leq i<|u|$.
		 	 	In this case, we  choose to play on the position $i$ on the game on $(u,v)$.
		 	 	Duplicator then responds in the game on $(u,v)$ by playing on $v$ at a position $j$.
		 	 	If the letter that marks $j$ is different from the one that marks $i$,
		 	 	Duplicator loses the game immediately. Otherwise, we  have to simulate
		 	 	Duplicator’s response in the game on $(u',v')$ by choosing a position in $J_s(g_j,g_{j+1})$
	 	 	 that is locally equivalent to Spoiler’s first pebble.
	 	 	 This is possible as the letters that mark $f_i$ on $u'$ and $g_j$ on $v'$ are equal, and
		 	 ${(f_{i-1},f_i,f_{i+1})\sim_s (g_{j-1},g_j,g_{j+1}).}$
		 	 	\begin{center}
		 	 	\begin{tikzpicture}[scale = 0.7,every node/.style={transform shape}]
\def\wlength{7}
\node (up){$u'$};
\node (vp) [below=0.9 of up] {$v'$};
\node (u)[right= 8 of up]{$u$};
\node (v) [below=0.9 of u] {$v$};

\path[-, draw, thick]  (up) -- ($(up)+(\wlength,0)$) node (iu) [pos = 0.25] {$\bullet$} node (iu2) [pos = 0.7] {$\bullet$};

\node (iul) at ($(iu.south)+(0,-.1)$) {$f_i$};
\node (iu2l) at ($(iu2.south)+(0,-.1)$) {$f_{i+1}$};
\node (dot0) at ($(iu.south west)+(3,-0.4)$) {};
\node (dot1) at ($(iu.south west)+(0,-0.4)$) {};
\node (dot2) at ($(iu.south west)+(5.8,-0.4)$) {};

\draw[ cyan, opacity = 0.2,fill] ($(iu.north west)+(0,0.4)$) rectangle  (dot0);
\draw[ red, opacity = 0.2,fill] ($(iu.north west)+(3,0.4)$) rectangle  ($(iu2.south west)+(1.5,-0.4)$);
\draw[dotted] (dot0) -- ($(dot0) +(0,1.5)$);
\draw[dotted] (dot1) -- ($(dot1) +(0,1.5)$);
\node (spoiler1) [right =1.2 of iu] [draw,circle , fill, red!60!white,opacity=1, label, minimum size = 15 ]{};
\draw [decorate, decoration={brace,,amplitude=5}] ($(dot1) +(0,1.5)$) -- ($(dot0) +(0,1.5)$) node [ above=3ex,  text height = 1.5ex,   pos =0.5] {$J_s(f_i,f_{i+1})$};
\path[-, draw, thick]  (u) -- ($(u)+(\wlength,0)$)  node (spoiler1p) [pos= 0.4, draw,circle , fill, red!60!white,opacity=1, label, minimum size = 15 ]{};
\node at ($(spoiler1p)+(0,-0.5)$) {$i$};
\path[-, draw, thick]  (v) -- ($(v)+(\wlength,0)$) node (Dupl1) [pos= 0.5, draw,circle , fill, red!60!white,opacity=1, label, minimum size = 15 ]{};
\node at ($(Dupl1)+(0,-0.5)$) {$j$};

\path[-, draw, thick]  (vp) -- ($(vp)+(\wlength,0)$) node (iv) [pos = 0.2] {$\bullet$} node (iv2) [pos = 0.8] {$\bullet$};

\node (ivl) at ($(iv.south)+(0,-.1)$) {$g_j$};
\node (iv2l) at ($(iv2.south)+(0,-.1)$) {$g_{j+1}$};
\node (dotv0) at ($(iv.south west)+(3,-0.4)$) {};
\node (dotv1) at ($(iv.south west)+(0,-0.4)$) {};
\node (dotv2) at ($(iv.south west)+(5.8,-0.4)$) {};

\draw[ cyan, opacity = 0.2,fill] ($(iv.north west)+(0,0.4)$) rectangle  (dotv0);
\draw[ red, opacity = 0.2,fill] ($(iv.north west)+(3,0.4)$) rectangle  ($(iv2.south west)+(1.5,-0.4)$);
\draw[dotted] ($(dotv0) +(0,1)$) -- ($(dotv0) +(0,-0.6)$);
\draw[dotted] ($(dotv1) +(0,1)$) -- ($(dotv1) +(0,-0.6)$);
\node (spoiler1) [right = of iv] [draw,circle , fill, red!60!white,opacity=1, label, minimum size = 15 ]{};
\draw [decorate, decoration={brace,,amplitude=5}] ($(dotv0) +(0,-0.6)$) -- ($(dotv1) +(0,-0.6)$)node [ below=3ex,  text height = 1.5ex,   pos =0.5] {$J_s(g_j,g_{j+1})$};

\end{tikzpicture}
		 	 	\end{center}
		 	 	\item This first move is on an extremal position, that is
				smaller than
		 	 	$\min J_s(f_0,f_1)=\min J_s(g_0,g_1)$
		 	 	or bigger than
		 	 	$\max J_s(f_{|u|-1},f_{|u|})=\max J_s(g_{|v|-1},g_{|v|}).$
		 	 	In this case, the back-and-forth process is degenerate since the game on $(u,v)$ has
		 	 	not started yet. It starts when Spoiler  plays on a non-extremal position.

				This kind of moves is not useful for Spoiler since Duplicator can only answer
				on the game on $(u',v')$ by choosing the exact same position on the other word.
				As long as Spoiler plays on these extremal positions, it is
				sufficient for Duplicator to choose the exact same position. As Spoiler
				follows a winning strategy, he eventually plays inside a segment
		 	 	$J_s(f_i,f_{i+1})$ for some
		 	 	integers $0\leq i<|u|$.
		 	 	Indeed, the extremal positions together with segments
		 	 	$J_s(f_i,f_{i+1})$ split into a partition of all positions of the word (see
		 	 	 Figure~\ref{dess_preuve2}).
		 	 	 Therefore, we can assume to be in the preceding case.
		 	 \end{enumerate}
		 	 We now explain how to construct a winning strategy for Spoiler on
		 	$(u,v)$ for the next rounds. We construct it inductively.
		 	We now assume to have played  $1\leq r<s$ rounds
		 	and that the pebbles of the preceding round are on positions
		 	 $i_r$ on $u$ (resp. $j_r$ on $v$) as well as
		 	 $i'_r$ on $u'$ (resp. $j'_r$ on $v'$).
			It is Spoiler's turn to play.
			By induction, we assume the following properties
			to be satisfied:
		 	\begin{itemize}
		 		\item If positions $i_r'$ and $j_r'$
		 		belong to
		 		$I_{(r,s)}(f_{i_r-1},f_{i_r+1})$ and to
		 		$I_{(r,s)}(g_{j_r-1},g_{j_r+1})$
		 		then they
		 		are locally equivalent for at least one of the two constrained games
		 		at $(s-r)$-rounds (see Figure~\ref{fig:dessin_preuve7}).
		 		The first constrained game corresponds to the second case, and the second
		 		constrained game corresponds to the third case.

		 		\item
		 		If this latter condition is not satisfied, then
		 		both pebbles have the exact same value, which is an extremal position on $(u',v')$.
		 		 More precisely, $i_r'=j_r'$ and either
		 		$$i_r'<\min J_{s-r}(f_0,f_1) = \min J_{s-r}(g_0,g_1)\text{ or }i_r'>\max J_{s-r}(f_{|u|-1},f_{|u|})=\max J_{s-r}(g_{|v|-1},g_{|v|})\enspace.$$
		 		\begin{center}
		 	 	\usetikzlibrary{calc,automata,arrows,chains,matrix,positioning,scopes,decorations.pathreplacing}

\begin{tikzpicture}[scale=0.7,every node/.style={transform shape}]
\def\wlength{7}5
\node (up){$u'$};
\node (vp) [below=1 of up] {$v'$};
\node (u)[right= 9 of up]{$u$};
\node (v) [below=1 of u] {$v$};

\path[-, draw, thick]  (up) -- ($(up)+(\wlength,0)$) node (iu) [pos = 0.1] {$\bullet$}  node (iu3) [pos = 0.5] {$\bullet$} node (iu2) [pos = 0.9] {$\bullet$};
\path[-, draw, thick]  (vp) -- ($(vp)+(\wlength,0)$) node (iv) [pos = 0.1] {$\bullet$}  node (iv3) [pos = 0.5] {$\bullet$} node (iv2) [pos = 0.9] {$\bullet$};

\node (dot0) at ($(iu3)+(-1.5,0.5)$) {};
\node (dot1) at ($(iv3)+(-1.5,0.5)$) {};

\draw[ cyan, opacity = 0.2,fill] ($(iu3)+(-1.5,0.5)$) rectangle  ($(iu3)+(1.5,-0.6)$);
\draw[ cyan, opacity = 0.2,fill] ($(iv3)+(-1.5,0.5)$) rectangle  ($(iv3)+(1.5,-0.6)$);
\draw[dashed]  ($(dot0)+(0,-0.5)$) --  ($(dot0)+(0,0.5)$);
\draw[dashed]  ($(dot1)+(0,-0.5)$) --  ($(dot1)+(0,-1.5)$);
\draw[dashed]  ($(dot0)+(3,-0.5)$) --  ($(dot0)+(3,0.5)$);
\draw[dashed]  ($(dot1)+(3,-0.5)$) --  ($(dot1)+(3,-1.5)$);
\draw [decorate, decoration={brace,mirror,amplitude=5}] ($(dot0)+(3,0.5)$) -- ($(dot0)+(0,0.5)$) node [ above=3ex,  text height = 1.5ex,   pos =0.5] {$I_{(r,s)}(f_{i_r-1},f_{i_r+1})$};
\draw [decorate, decoration={brace,,amplitude=5}] ($(dot1)+(3,-1.5)$) -- ($(dot1)+(0,-1.5)$) node [ below=3ex,  text height = 1.5ex,   pos =0.5] {$I_{(r,s)}(g_{j_r-1},g_{j_r+1})$};

\node (iul) at ($(iu.south)+(0,-.1)$) {$f_{i_r-1}$};
\node (iu2l) at ($(iu2.south)+(0,-.1)$) {$f_{i_r+1}$};
\node (iu3l) at ($(iu3.south)+(0,-.1)$) {$f_{i_r}$};

\node (spoiler1) [right = 1.8 of iu] [draw,circle , fill, red!60!white,opacity=1, label, minimum size = 10 ]{};
\path[-, draw, thick]  (u) -- ($(u)+(\wlength,0)$)  node (spoiler1p) [pos= 0.4, draw,circle , fill, red!60!white,opacity=1, label, minimum size = 10 ]{};
\node at ($(spoiler1p)+(0,-0.5)$) {$i_r$};
\path[-, draw, thick]  (v) -- ($(v)+(\wlength,0)$) node (Dupl1) [pos= 0.5, draw,circle , fill, green!60!black,opacity=1, label, minimum size = 10 ]{};
\node at ($(Dupl1)+(0,-0.5)$) {$j_r$};

\node (ivl) at ($(iv.south)+(0,-.1)$) {$g_{j_r-1}$};
\node (iv3l) at ($(iv3.south)+(0,-.1)$) {$g_{j_r}$};
\node (iv2l) at ($(iv2.south)+(0,-.1)$) {$g_{j_r+1}$};
\node (dotv0) at ($(iv.south west)+(3,-0.4)$) {};
\node (dotv1) at ($(iv.south west)+(0,-0.4)$) {};
\node (dotv2) at ($(iv.south west)+(5.8,-0.4)$) {};

\node (spoiler1) [right =1.5 of iv] [draw,circle , fill, green!60!black,opacity=1, label, minimum size = 10 ]{};

\end{tikzpicture}
		 	 	\end{center}
		 		\item We assume the configuration of the game on
		 		$(u',v')$ to be winnable for Spoiler: he has a winning strategy in
		 		less than $(s-r)$ rounds.
		 	\end{itemize}
			We are going to distinguish two cases.
			Either Duplicator is going to answer on Spoiler's latest move in the game on $(u,v)$
			or Spoiler wins the game. Since we seek a winning strategy for Spoiler, we assume
			that Duplicator successfully answers on $(u,v)$. If this is true, then
			we are going to find an adequate answer for Duplicator in the game on $(u',v')$.
			Since Spoiler has a winning strategy for this latter game, Duplicator
			 eventually loses the game on $(u',v')$ and therefore the game on $(u,v)$.
			 We
		 	remark that the  number of alternations of the
			new Spoiler's winning strategy on $(u,v)$
			is at most 	the one of his strategy on $(u',v')$.
		 	This concludes the proof.

			Nevertheless, it remains to be explained how we construct the position of Spoiler
			on $(u,v)$ and how to deduce from a correct answer for Duplicator on $(u,v)$, a correct answer for
			Duplicator on  	$(u',v')$.

		 	We use the Spoiler's  winning strategy  on $(u',v')$ to construct a new move for Spoiler
		 	on $(u,v)$. Without loss of generality, we assume that this move is on $u'$ and we denote by
		 	$i_{r+1}'$ its position.
		 	We now distinguish four cases that only depend on the value of
		 	$i_{r+1}'$ (see Figure~\ref{fig:dessin_preuve7}).
		 	Indeed, the segment $\{0,\ldots,n-1\}$ is split into four parts that
		 	correspond to the four following cases:
		 	\begin{enumerate}
		 		\item The first case corresponds to segments of the form
		 				$J_{s-r-1}(f_k,f_{k+1})$
		 				for $k\neq i_r$ and $k\neq i_{r-1}$.
		 				It includes almost all the positions of $\{0,\ldots,n\}$ except extremal positions
		 			and a \emph{hole} around positions $i_r'$ and $i_{r-1}'$.
		 		\item The second case corresponds to the \emph{truncated} segment to the left
		 		of the previous pebble on $u'$.
				This is the initial segment of the second constrained game for this position. More precisely
				it is the segment $V(f_{i_r-1},s-r-1)$.
		 		\item The third case corresponds to the allowed positions for the constrained game
		 		around $i_r'$. More precisely, it is the
		 		segment
		 		$I_{(r+1,s)}(f_{i_{r}-1},f_{i_{r}+1}).$
				\item  The last case corresponds to the extremal positions. They are the positions that are
				not handled by the other cases. They are either at the beginning or at the end of the word.
			\end{enumerate}
			The four cases deal with all the positions since the segments of the form
			$J_{s-r-1}(f_k,f_{k+1})$ and the extremal positions form a partition of all the positions.
			Furthermore, by Lemma~\ref{lem:tech_proof}, we have
			\begin{align*}
			J_{s-r-1}(f_{i_r-1},f_{i_r})\cup J_{s-r-1}(f_{i_r},&f_{i_r+1})	= V(f_{i_r-1},s-r-1)\cup
			 I_{(r+1,s)}(f_{i_r-1},f_{i_r+1})\enspace.
			\end{align*}

		 	\begin{figure*}[h]
		 		\centering
		 		\usetikzlibrary{calc,automata,arrows,chains,matrix,positioning,scopes,decorations.pathreplacing}

\begin{tikzpicture}[scale = 0.7,every node/.style={transform shape}]
\def\wlength{20}
\node (up){$u'$};
\node [draw,circle , fill, red!60!white,opacity=1, label, minimum size = 1 ] (old_pebble) [above = 3 of up] {};
\node [right = 0.5of old_pebble] {old pebble position ($i_r'$)};
\node [draw,circle , fill, green!60!black,opacity=1, label, minimum size = 1 ] (new_pebble) [below = 0.5 of old_pebble] {};
\node [right = 0.5 of new_pebble] { new pebble potential positions};

\path[-, draw, thick]  (up) -- ($(up)+(\wlength,0)$) node (iu1) [pos = 0.1] {$\bullet$}  node (iu2) [pos = 0.2] {$\bullet$}
node [draw, rectangle,fill, color = white,] (iu3) [pos = 0.3] {$\cdots$} 
node [draw, rectangle,fill, color = white,] (iu4) [pos = 0.72] {$\cdots$} 
node [draw, rectangle,fill, color = white,] (iu5) [pos = 0.92] {$\cdots$} 
node (ir0) [pos=0.45]{$\bullet$}node (ir) [pos=0.6]{$\bullet$}node (ir1) [pos=0.8]{$\bullet$} 
node (old) [pos=0.66] [draw,circle , fill, red!60!white,opacity=1, label, minimum size = 1 ]{}
node (new1) [pos=0.06] [draw,circle , fill, green!60!black,opacity=1, label, minimum size = 1 ]{}
node (new2) [pos=0.63] [draw,circle , fill, green!60!black,opacity=1, label, minimum size = 1 ]{}
node (new3) [pos=0.86] [draw,circle , fill, green!60!black,opacity=1, label, minimum size = 1 ]{}
node (new4) [pos=0.4] [draw,circle , fill, green!60!black,opacity=1, label, minimum size = 1 ]{}
;
\node at (iu3) {$\cdots$};
\node at (iu4) {$\cdots$};
\node at (iu5) {$\cdots$};

\node (u1l) at ($(iu1.south)+(0,-.1)$) {$i_0$};
%\node (u1l) at ($(iu1.north)+(0,.15)$) {$c\ \cdots\ c \ \cdots  \ c$};
\node (u2l) at ($(iu2.south)+(0,-.1)$) {$f_0$};
\node (ir0l) at ($(ir0.south)+(0,-.1)$) {$f_{i_r-1}$};
\node (irl) at ($(ir.south)+(0,-.1)$) {$f_{i_r}$};
\node (ir1l) at ($(ir1.south)+(0,-.1)$) {$f_{i_{r+1}}$};
\draw[ cyan, opacity = 0.1,fill] ($(iu2.north west)+(-0.8,1)$) rectangle  ($(iu2.south east)+(15,-1)$);
\draw[ red, opacity = 0.2,fill] ($(ir0.north west)+(-1,0.6)$) rectangle  ($(ir0.south east)+(1,-0.6)$);
\path[  dashed,draw] ($(iu2.north west)+(-0.8,1)$) --  ($(iu2.north west)+(-0.8,-2)$);
\path[  dashed,draw] ($(iu2.north west)+(-3.7,1)$) --  ($(iu2.north west)+(-3.7,-2)$);
\path[  dashed,draw] ($(ir0.north west)+(-1,1)$) --  ($(ir0.north west)+(-1,-2)$);
\path[  dashed,draw] ($(ir0.north west)+(1.525,1.5)$) --  ($(ir0.north west)+(1.525,-2)$);
\path[  dashed,draw] ($(ir0.north west)+(4.925,1.5)$) --  ($(ir0.north west)+(4.925,-2)$);
\path[  dashed,draw] ($(ir0.north west)+(6.225,1)$) --  ($(ir0.north west)+(6.225,-2)$);
\path[  dashed,draw] ($(ir0.north west)+(8.925,1)$) --  ($(ir0.north west)+(8.925,-2)$);

\draw [decorate, decoration={brace,,amplitude=5}] ($(iu2.north west)+(-0.8,-2)$) -- ($(iu2.north west)+(-3.7,-2)$)  node [ below=3ex,  text height = 1.5ex,   pos =0.5] {left extremal positions}
node [ below=7ex,  text height = 1.5ex,   pos =0.5] {case~\ref{preuve_cas:1}};
\draw [decorate, decoration={brace,mirror,amplitude=5}] ($(ir0.north west)+(-1,-2)$) --  ($(ir0.north west)+(1.525,-2)$)  node [ below=3ex,  text height = 1.5ex,   pos =0.5] {$V(f_{i_r-1},s-r-1)$}
node [ below=7ex,  text height = 1.5ex,   pos =0.5] {case~\ref{preuve_cas:4}};

\draw [decorate, decoration={brace,amplitude=5}] ($(ir0.north west)+(1.525,1.5)$) -- ($(ir0.north west)+(4.925,1.5)$)  node [ above=3ex,  text height = 1.5ex,   pos =0.5] {$I_{(r+1,s)}(f_{i_r-1},f_{i_r+1})$}
node [ above=7ex,  text height = 1.5ex,   pos =0.5] {case~\ref{preuve_cas:2}};

\draw [decorate, decoration={brace,mirror,amplitude=5}]($(ir0.north west)+(6.225,-2)$) -- ($(ir0.north west)+(8.925,-2)$)  node [ below=3ex,  text height = 1.5ex,   pos =0.5] {$J_{s-r-1}(f_{i_{r+1}},f_{i_{r+1}+1})$}
node [ below=7ex,  text height = 1.5ex,   pos =0.5] {case~\ref{preuve_cas:3}};

%\node (u2l) at ($(iu2.north)+(0,.1)$) {$u_0$};

\end{tikzpicture}
		 		\caption{The four cases to deal with~\label{fig:dessin_preuve7}}
		 		\end{figure*}
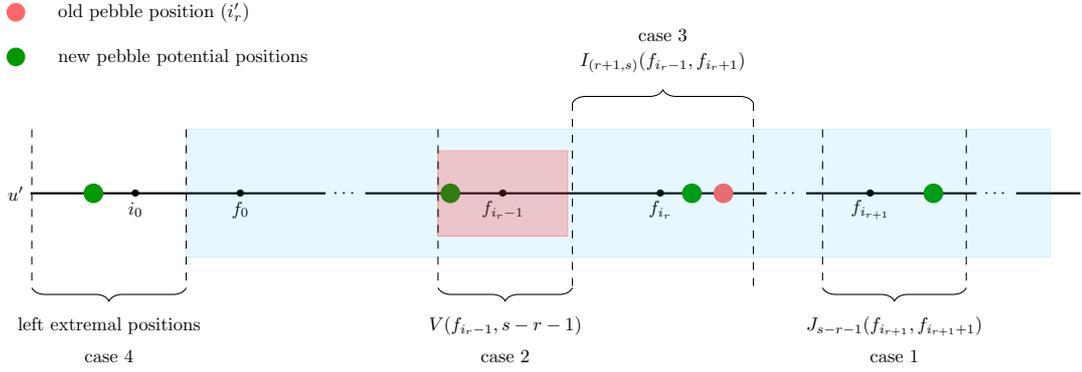
			We now construct the back-and-forth strategy for each of the four cases:
		 	\begin{enumerate}
		 		\item There exists an integer $k$ different from $i_r$ and $i_{r}-1$ such that the position $i_{r+1}'$
		 		belongs to $J_{s-r-1}(f_k,f_{k+1}).$
		 		It is then sufficient for Spoiler to choose $i_{r+1}=k$ on $u$ as its next move for the game
		 		on $(u,v)$. We remark that all the predicates other than  the linear order are
		 		evaluated to  \emph{false} between $i_{r}'$ and $i_{r+1}'$. We assume Duplicator
		 		to be able to answer correctly at a position $j_{r+1}$. We now choose
		 		a position
		 		$j_{r+1}'$ on $v'$ in the set
		 		$J_{s-r-1}(g_{j_{r+1}},g_{j_{r+1}}+1)$
				such that positions $i_{r+1}'$ and $j_{r+1}'$ are locally equivalent for
				the first constrained game. This is possible since positions
				$f_{i_{r+1}}$ and $g_{j_{r+1}}$ are labelled by the same letter and because
				$$(f_{i_{r+1}-1},f_{i_{r+1}},f_{i_{r+1}+1})\sim_s (g_{j_{r+1}-1},g_{j_{r+1}},g_{j_{r+1}+1})\enspace.$$
		 		We remark that all predicates except for the linear order are evaluated as \emph{false} between
		 		$j_r'$ and $j_{r+1}'$. Furthermore, the value of the order predicate
		 		between $i_r'$ and $i_{r+1}'$ is exactly the same as between $i_r$ and $i_{r+1}$
		 		which is also the same as between
		 		$j_r$ and $j_{r+1}$ and between $j_r'$ and $j_{r+1}'$. Since the letters labelling
		 		positions $i_{r+1}$ on $u$ and  $j_{r+1}$ on $v$ are the same, we deduce that
		 		position  $j_{r+1}'$ is correct for Duplicator. Consequently,
		 		the new configuration satisfies the induction hypothesis.
		 		\label{preuve_cas:3}
		 		\item We assume that $i_{r+1}'$ belongs to $V(f_{i_r-1},s-r-1)$.
		 		In this case, we choose $i_{r+1} = i_{r}-1$,
		 		meaning that Spoiler plays on the position just to the left of $i_r$. Since the successor
		 		relation is in the signature, Duplicator is also forced to play at the position immediately
		 		to the left. Here the very same arguments that in case~\ref{preuve_cas:3}
		 		allow us to build a position $j_{r+1}'$ so that
		 		the new configuration satisfies the induction hypothesis hold. The only difference,
		 		is that this time we are using the second constrained game, not the first.\label{preuve_cas:4}
		 		\item  If $i'_{r+1}$ belongs to $I_{(r+1,s)}(f_{i_{r}-1},f_{i_{r}+1}),$
		 		then according to the induction hypothesis, Duplicator
		 		has a position $j'_{r+1}$ in the set
		 		$I_{(r+1,s)}(g_{j_r-1},g_{j_r+1})$ which is locally equivalent to  $i'_{r+1}$.
		 		By choosing this position and by setting
		 		$i_{r+1}= i_{r}$ and $j_{r+1} = j_r$,  we obtain a new configuration that
		 		satisfies the induction hypothesis.
		 		We remark that in this case, the game configuration on $(u,v)$ does not change.\label{preuve_cas:2}

		 		\item The last case is the one which $i_{r+1}'$ does not satisfy
		 		any of the preceding case.
		 		By construction, the positions of the words are split into segments $J_{s-r}(f_k,f_{k+1})$
		 		(resp. $J_{s-r}(g_k,g_{k+1})$) and the extremal positions. Therefore, if the integer
		 		$i_{r+1}$ is not treated by the other cases, then
		 		this position has to be extremal. That is to say
				$$i_{r+1}'<\min J_{s-r-1}(f_0,f_1) = \min J_{s-r-1}(g_0,g_1)$$
				 or
				$$
		 		i_{r+1}'>\max J_{s-r-1}(f_{|u|-1},f_{|u|})
		 		=\max J_{s-r-1}(g_{|v|-1},g_{|v|})\enspace.$$
		 		We choose $j_{r+1}'=i_{r+1}'$ for Duplicator on $v'$, as well as $i_{r+1}=i_r$ and
		 		$j_{r+1}=j_r$. Therefore the game on $(u,v)$ does not evolve and the new configuration
		 		satisfies the induction hypothesis.
				We remark that it is possible for $i_{r+1}'$ to be an extremal position but be handled
				by one of the preceding cases. For instance, if $i_{r}$ belongs to $J_{s-r}(f_0,f_1)$ and if
				$$i_{r+1}\in I_{(r+1,s)}(i_0,f_1)\cap \{0,\ldots, \min J_{s-r-1}(f_0,f_1)\}\enspace,$$
				then Duplicator  follows the first constrained game and it is therefore possible that
				 $i_{r+1}\neq j_{r+1}$. In this particular case, since $i_{r+1}'$ and $j_{r+1}'$ are
				locally equivalent, the configuration still satisfies the induction hypothesis.
		 		\label{preuve_cas:1}
		 	\end{enumerate}
		 	As all the cases are treated, we have proved that as long as Duplicator answers correctly
		 	on $(u,v)$, it is possible for him to answer correctly on $(u',v')$.
		 	Since Spoiler follows a winning strategy on	$(u',v')$, Duplicator will eventually
		 	not be able to answer on $(u,v)$. This concludes the proof.

	\section*{Acknowledgement}

	I am grateful to Olivier Carton for all of his help and support, without which this paper would
	not have been possible. I also thank  Michaël Cadilhac, Amy
	Hadfield, and the anonymous reviewers for their contribution
	in improving a lot the final version of this paper.

\bibliographystyle{plain}
\bibliography{biblio}

\begin{thebibliography}{10}

\bibitem{Bar92}
David A.~Mix Barrington, Kevin Compton, Howard Straubing, and Denis
  Th{\'e}rien.
\newblock Regular languages in {${\rm NC}\sp 1$}.
\newblock {\em J. Comput. System Sci.}, 44(3):478--499, 1992.

\bibitem{BILST05}
David A.~Mix Barrington, Neil Immerman, Clemens Lautemann, Nicole Schweikardt,
  and Denis Th{\'e}rien.
\newblock First-order expressibility of languages with neutral letters or:
  {T}he {C}rane {B}each conjecture.
\newblock {\em J. Comput. System Sci.}, 70(2):101--127, 2005.

\bibitem{CFL83}
AshokK. Chandra, Steven Fortune, and Richard Lipton.
\newblock Lower bounds for constant depth circuits for prefix problems.
\newblock In Josep Diaz, editor, {\em Automata, Languages and Programming},
  volume 154 of {\em Lecture Notes in Computer Science}, pages 109--117.
  Springer Berlin Heidelberg, 1983.

\bibitem{CR96}
Shiva Chaudhuri and Jaikumar Radhakrishnan.
\newblock Deterministic restrictions in circuit complexity.
\newblock In {\em Proceedings of the {T}wenty-eighth {A}nnual {ACM} {S}ymposium
  on the {T}heory of {C}omputing ({P}hiladelphia, {PA}, 1996)}, pages 30--36.
  ACM, New York, 1996.

\bibitem{FSS84}
Merrick~L. Furst, James~B. Saxe, and Michael Sipser.
\newblock Parity, circuits, and the polynomial-time hierarchy.
\newblock {\em Mathematical Systems Theory}, 17(1):13--27, 1984.

\bibitem{Immerman87}
Neil Immerman.
\newblock Languages that capture complexity classes.
\newblock {\em SIAM J. Comput.}, 16(4):760--778, 1987.

\bibitem{Koucky09}
Michal Kouck{\'y}.
\newblock Circuit complexity of regular languages.
\newblock {\em Theory Comput. Syst.}, 45(4):865--879, 2009.

\bibitem{KLPT06}
Michal Kouck\'y, Clemens Lautemann, Sebastian Poloczek, and Denis Therien.
\newblock Circuit lower bounds via ehrenfeucht-fraisse games.
\newblock In {\em Proceedings of the 21st Annual IEEE Conference on
  Computational Complexity}, CCC '06, pages 190--201, Washington, DC, USA,
  2006. IEEE Computer Society.

\bibitem{KPT05}
Michal Kouck\'{y}, Pavel Pudl\'{a}k, and Denis Th{\'e}rien.
\newblock Bounded-depth circuits: Separating wires from gates.
\newblock In {\em Proceedings of the 37th Annual ACM Symposium on Theory of
  Computing}, STOC '05, pages 257--265, New York, NY, USA, 2005. ACM.

\bibitem{KM}
Andreas Krebs and Pierre Mckenzie.
\newblock Private communication.

\bibitem{KS12b}
Andreas Krebs and A.~V. Sreejith.
\newblock Non-definability of languages by generalized first-order formulas
  over ($\mathbb{N}$,+).
\newblock In {\em Proceedings of the 27th Annual IEEE/ACM Symposium on Logic in
  Computer Science}, LICS '12, pages 451--460, Washington, DC, USA, 2012. IEEE
  Computer Society.

\bibitem{KS12}
Andreas Krebs and Howard Straubing.
\newblock {An effective characterization of the alternation hierarchy in
  two-variable logic}.
\newblock In {\em Proceedings of the 32th international conference on
  Foundations of Software Technology and Theoretical Computer Science}, FSTTCS
  '12, pages 86--98, Dagstuhl, Germany, 2012. Schloss Dagstuhl--Leibniz-Zentrum
  fuer Informatik.

\bibitem{KL13}
Manfred Kufleitner and Alexander Lauser.
\newblock Quantifier alternation in two-variable first-order logic with
  successor is decidable.
\newblock In {\em Proceedings of the 30th {I}nternational {S}ymposium on
  {T}heoretical {A}spects of {C}omputer {S}cience}, STACS '13, pages 305--316.
  Schloss Dagstuhl. Leibniz-Zent. Inform., Wadern.

\bibitem{KW12}
Manfred Kufleitner and Pascal Weil.
\newblock The {$\rm FO\sp 2$} alternation hierarchy is decidable.
\newblock In {\em Computer science logic 2012}, volume~16 of {\em LIPIcs.
  Leibniz Int. Proc. Inform.}, pages 426--439. Schloss Dagstuhl. Leibniz-Zent.
  Inform., Wadern, 2012.

\bibitem{Libkin_book}
Leonid Libkin.
\newblock {\em Elements of Finite Model Theory}.
\newblock Springer, 2004.

\bibitem{RW91}
Prabhakar Ragde and Avi Wigderson.
\newblock Linear-size constant-depth polylog-threshold circuits.
\newblock {\em Inform. Process. Lett.}, 39(3):143--146, 1991.

\bibitem{Ramsey30}
Frank~P. Ramsey.
\newblock On a {P}roblem of {F}ormal {L}ogic.
\newblock {\em Proc. London Math. Soc.}, S2-30(1):264, 1930.

\bibitem{RS06}
Amitabha Roy and Howard Straubing.
\newblock Definability of languages by generalized first-order formulas over
  {$(\Bbb N,+)$}.
\newblock {\em SIAM J. Comput.}, 37(2):502--521 (electronic), 2007.

\bibitem{IW09}
Philipp Weis and Neil Immerman.
\newblock Structure theorem and strict alternation hierarchy for {${\rm FO}\sp
  2$} on words.
\newblock {\em Log. Methods Comput. Sci.}, 5(3):3:4, 23, 2009.

\end{thebibliography}
\end{document}